\DeclareMathAlphabet{\mathcal}{OMS}{cmsy}{m}{n}
\newcommand\sysname{$\mathsf{AnoFel}$\xspace}
\newcommand\comm{\mathsf{comm}}
\newcommand\mpk{\mathsf{mpk}}
\newcommand\pk{\mathsf{pk}}
\newcommand\msk{\mathsf{msk}}
\newcommand\sk{\mathsf{sk}}
\newcommand\salt{\mathsf{salt}}
\newcommand\dt{\mathsf{dt}}
\newcommand\sid{\mathsf{sid}}
\newcommand\cl{\mathsf{cl}}
\newcommand\ppt{\mathsf{PPT}\xspace}
\newcommand\prf{\mathsf{PRF}\xspace}
\newcommand\taag{\mathsf{tag}\xspace}
\newcommand\negl{\mathsf{negl}\xspace}
\newcommand\hybrid{\mathsf{Hybrid}\xspace}
\newcommand\error{\mathsf{Er}\xspace}
\newcommand\anongame{\mathsf{AnonGame}\xspace}
\newcommand\dindgame{\mathsf{DINDGame}\xspace}
\newcommand\pafl{\mathcal{O}_{\mathsf{PAFL}}\xspace}
\newcommand\corrupt{\mathsf{corrupt}}
\newcommand\train{\mathsf{train}}
\newcommand\access{\mathsf{access}}
\newcommand\register{\mathsf{register}}
\newcommand\aux{\mathsf{aux}}
\newcommand\stt{\mathsf{state}}
\newcommand\setup{\mathsf{setup}}
\newcommand*\concat{\mathbin{\|}}
\newcommand{\zo}{\{0,1\}}
\newcommand\adv{\mathcal{A}}
\newcommand\PK{\mathcal{PK}}
\newcommand\clients{\mathcal{C}}
\newcommand\agg{\mathcal{AG}}
\newcommand\CM{\mathcal{CM}}
\newtheorem{remark}{Remark}
\newtheorem{theorem}{Theorem}
\newtheorem{lemma}{Lemma}
\newtheorem{definition}{Definition}
\begin{document}

\title{\Large \bf AnoFel: Supporting Anonymity for Privacy-Preserving Federated Learning}

\author{
{\rm Ghada Almashaqbeh}\\
University of Connecticut\\
ghada@uconn.edu
\and
{\rm Zahra Ghodsi}\\
Purdue University \\
zahra@purdue.edu
} %

\maketitle

\begin{abstract}
Federated learning enables users to collaboratively train a machine learning model over their private datasets. Secure aggregation protocols are employed to mitigate information leakage about the local datasets. This setup, however, still leaks the \emph{participation} of a user in a training iteration, which can also be sensitive. Protecting user anonymity is even more challenging in dynamic environments where users may (re)join or leave the training process at any point of time.

In this paper, we introduce \sysname, the first framework to support private and anonymous dynamic participation in  federated learning. \sysname leverages several cryptographic primitives, the concept of anonymity sets, differential privacy, and a public bulletin board to support anonymous user registration, as well as unlinkable and confidential model updates submission. Additionally, our system allows dynamic participation, where users can join or leave at any time, without needing any recovery protocol or interaction. To assess security, we formalize a notion for privacy and anonymity in federated learning, and formally prove that \sysname satisfies this notion. To the best of our knowledge, our system is the first solution with provable anonymity guarantees. To assess efficiency, we provide a concrete implementation of \sysname, and conduct experiments showing its ability to support learning applications scaling to a large number of clients. For an MNIST classification task with 512 clients, the client setup takes less than 3 sec, and a training iteration can be finished in 3.2 sec.
We also compare our system with prior work and demonstrate its practicality for contemporary learning tasks.
\end{abstract}

\section{Introduction}
\label{intro}
Privacy-preserving machine learning is a critical problem that has received huge interest from both academic and industrial communities. Many crucial applications involve training ML models over highly sensitive user data, such as medical screening~\cite{giger2018machine}, credit risk assessment~\cite{galindo2000credit}, or autonomous vehicles~\cite{chen2015deepdriving}. Enabling such applications requires machine learning frameworks that preserve the privacy of users' datasets.

Federated learning (FL) aims to achieve this goal by offering a decentralized paradigm for model training. Participants, or clients, train the model locally over their datasets, and then share only the local gradients with the model owner, or the server. After aggregating updates from all clients, the server shares the updated model with these clients to start a new training iteration. This iterative process continues until the model converges. 

However, individual model updates leak information about clients' private datasets~\cite{melis2019exploiting,nasr2019comprehensive}, and therefore aggregation should be done in a secure way: a server only sees the aggregated value rather than individual contributions from each client. A large body of work emerged to build cryptographic protocols for secure aggregation to support private federated learning, e.g.,~\cite{bonawitz2017practical,bell2020secure,truex2019hybrid,so2021turbo,yang2021lightsecagg,ryffel2020ariann}. Even with a provably secure aggregation protocol, the aggregated model updates still impose a leakage; it was shown that membership inference attacks can determine whether a data sample has been used in the training of a given ML model~\cite{shokri2017membership,nasr2019comprehensive}. Several defense techniques have been proposed that rely on, e.g., regularization techniques to reduce overfitting~\cite{kaya2020effectiveness,wang2021improving}, knowledge distillation~\cite{shejwalkar2019membership}, and differential privacy~\cite{yeom2018privacy}.

\textbf{Anonymous client participation.} A related question to protecting data privacy in federated learning is protecting client identity and breaking linkability with any information that could be deduced from training. Anonymity is critical for training models over sensitive data related to, e.g., rare diseases or sexual abuse incidents. The mere knowledge that a user has participated implies being ill or a victim. It may also allow collecting sensitive information, e.g., location in autonomous vehicles related applications, or financial standing in trading or loans related training tasks. Such leakage invades privacy, and may discourage participation.

Unfortunately, existing frameworks for private federated learning either don't support client anonymity, or suffer from security issues. Secure aggregation protocols~\cite{bonawitz2017practical,bell2020secure} require full identification of clients through a public key infrastructure (PKI) or a trusted client registration process to prevent Sybil attacks and impersonation. Even frameworks that deal with honest-but-curious adversaries~\cite{bonawitz2017practical,truex2019hybrid,so2021turbo,yang2021lightsecagg,ryffel2020ariann} assign clients logical identities, where the mapping between the logical and real identities is known to the server or a trusted third party. On the other hand, techniques that anonymize datasets~\cite{yeom2018privacy,sweeney2002k,machanavajjhala2007diversity,li2007t} do not support participation anonymity, but rather hide sensitive information in the dataset before being used in training. At the same time, existing solutions for anonymous client participation have several limitations~\cite{domingo2021secure,li2021privacy,hasirciouglu2021private,zhao2021anonymous,chen2022fedtor}: they either rely on fixed psuedonyms that are susceptible to traffic analysis attacks~\cite{domingo2021secure}, assume a trusted party to mediate communication~\cite{zhao2021anonymous}, or are vulnerable to man-in-the-middle attacks~\cite{li2021privacy,hasirciouglu2021private}.

\textbf{Dynamic settings.} Allowing clients to (re)join or leave at any time is invaluable for training tasks targeting dynamic environments. A decentralized activity as federated learning may deal with heterogeneous settings involving weak clients who may use low-power devices or have unstable network connectivity. Even it could be the case that clients simply change their minds and abort the training protocol after it starts. The ability to support this dynamicity at low overhead promotes participation, but it is a more challenging setup for client anonymity. 

Most privacy-preserving federated learning solutions do not support dynamic participation; usually clients must join at the onset of the training process during the setup phase. Several solutions support client dropouts (at a relatively high overhead by employing highly interactive recovery protocols that reveal dropout identities)~\cite{bonawitz2017practical,truex2019hybrid,so2021turbo,yang2021lightsecagg} but not addition, or support both but at the expense of a constrained setup that places a cap on the number of clients who can participate~\cite{xu2019hybridalpha}. To the best of our knowledge, supporting anonymity in a dynamic environment has been absent from the current state-of-the-art.

\textbf{An open question.} Therefore, we ask the following question: \emph{can we achieve private federated learning that supports users' anonymity in both static and dynamic settings?}

\subsection{Our Contributions} 
In this paper, we answer this question in the affirmative and propose a system called \sysname that fulfills the requirements above. In particular, we make the following contributions.

{\bf System design.} \sysname utilizes various cryptographic primitives and privacy techniques to achieve its goals. To address anonymity, our system combines a public bulletin board, cryptographic commitments, non-interactive zero-knowledge proofs, and differential privacy such that users can participate in training without revealing their identities. This involves (1) an anonymous registration process guaranteeing that only legitimate clients with honestly-generated datasets can participate, and (2) an unlinkable model updates submission that cannot be traced back to the client. We rely on anonymity sets and zero-knowledge proofs to achieve this, where a client proves owning a legitimate dataset (during registration), and being one of the registered clients (during model updates submission), without revealing anything about their identity or registration information. Moreover, to address membership, inference, and model inversion attacks which could also compromise participation anonymity, we employ differential privacy. A client, after training the model locally and before encrypting the model updates, will sample a noise value and add it to the model updates before encrypting them. The value of this noise is adaptive; it decreases as the number of active clients (those who submitted updates so far in iteration) increases. This reduces the impact on training accuracy without violating the privacy leakage guarantees obtained by differential privacy as elaborated later.

To support dynamic user participation and secure aggregation of model updates, our system employs threshold homomorphic encryption. It splits the roles of a model owner from the aggregators (where we use a committee of aggregators to distribute trust). Aggregators receive encrypted model parameter updates (gradients) from users, and at the end of a training iteration, they operate on these ciphertexts by (homomorphically) adding them to produce a ciphertext of the aggregation. Afterwards, the aggregators decrypt the result and send the aggregated plaintext updates to the model owner so that a new training iteration can be started. Due to this configuration, the clients are not involved in the aggregation or decryption processes and can (re)join and leave training at any time without interrupting the system operation. Furthermore, the bulletin board provides a persistent log accessible to all parties, and facilitates indirect communication between them to reduce interaction.

{\bf Formal security notions and analysis.} We define a notion for private and anonymous federated learning that encompasses three properties: correctness, anonymity, and dataset privacy. Then, we formally prove the security of \sysname based on this notion. To the best of our knowledge, we are the first to provide such formal definition covering anonymity, and the first to build a provably secure client anonymity solution for private federated learning. Our notion could be of independent interest as it provides a rigorous foundation for other anonymity solutions to prove their security guarantees.

{\bf Implementation and evaluation.} To show practicality, we implement \sysname  and empirically evaluate its performance covering different federated learning tasks. We demonstrate scalability of our system by testing scenarios that involve large numbers of clients and contemporary models---the benchmarked architectures are the largest studied in privacy-preserving federated learning literature.
We also show that the augmented components to support privacy and anonymity add reasonable overhead to client runtime.  
For example, in a network of 512 clients, the client setup needed to join the training task takes less than 3 sec, and each training iteration for MNIST classification takes the client a total of 3.2 sec.
We also compare our system to prior work on privacy-preserving federated learning. For our largest benchmark on SqueezeNet architecture trained over TinyImageNet dataset with a network of 512 clients, \sysname is only 1.3$\times$ slower to finish a training round (in 17.5 sec) compared to Truex et al.~\cite{truex2019hybrid} and 2.8$\times$ slower than Bonawitz et al.~\cite{bonawitz2017practical} (the former is a non-interactive private scheme while the latter is interactive, and neither support anonymity). Additionally, we evaluate the accuracy of models trained with \sysname and show that compared to a non-DP baseline, we obtain models with $<0.5\%$ accuracy loss in both independent and identically distributed (IID) datasets between clients and non-IID settings.

\section{A Security Notion for Private and Anonymous Federated Learning}
\label{sec:security-def}
In this section, we define a formal security notion for a private and anonymous federated learning scheme (PAFL). This notion, and its correctness and security properties, are inspired by~\cite{zerocash,quisquis,zether,kairouz2021distributed}. \medskip

\noindent\textbf{Notation.} We use $\lambda$ to denote the security parameter, $\alpha$ to denote correctness parameter, $\gamma$ to denote the privacy advantage of the adversary ($\alpha$ and $\gamma$ are the parameters of the technique used to address membership attacks, if any) $\negl(\cdot)$ to denote negligible functions, and boldface letters to represent vectors. The variable $\stt$ represents the system state, including the data recorded on the bulletin board (these posted by all parties, and the public parameters of the cryptographic building blocks). The notation $\adv^{\mathcal{O}}$ means that an entity, in this case the adversary $\adv$, has an oracle access to $\mathcal{O}$. Lastly, $\xleftarrow{\$}$ denotes drawn at random, and $\ppt$ is a shorthand for probabilistic polynomial time.

\begin{definition}[$(\alpha, \gamma)$-PAFL Scheme]\label{def:pafl} Let $\Pi$ be a protocol between a server $S$, set of aggregators $\agg$, and a set of clients $\clients$ such that each client $\cl_i \in \clients$ holds a dataset $D_i$. Let $\mathbf{M}$ be the initial model that $S$ wants to train, $\mathbf{M}_{actual}$ be the model produced by training $\mathbf{M}$ over $D_i$ (in the clear) for $i = 1, \dots, |\clients|$, and $\mathbf{M}_{\Pi}$ be the trained model produced by the protocol $\Pi$. $\Pi$ is a private and anonymous federated learning (PAFL) scheme, parameterized by bounds $\alpha$ and $\gamma$, if it satisfies the following properties for every $\mathbf{M}$:
\begin{itemize}
\itemsep-0.2em
\vspace{-4pt}
\item \textbf{$\alpha$-Correctness}: The model trained by $\Pi$ achieves an error bound $\alpha$ with high probability compared to the actual model. That is, for $\alpha \geq 0$, and an error function $\error$, with high probability, we have $\error(\mathbf{M}_{actual}, \mathbf{M}_{\Pi}) \leq \alpha$.

\item \textbf{Anonymity}: Any $\ppt$ adversary $\adv$ has a negligible advantage in winning the anonymity game $\anongame$. Formally, for a security parameter $\lambda$, there exists a negligible function $\negl$ such that $\adv$ wins $\anongame$ with probability at most $\frac{1}{2} + \negl(\lambda)$, where the probability is taken over all the randomness used by $\mathcal{A}$ and $\Pi$.

\item \textbf{$\gamma$-Dataset Privacy}: Any $\ppt$ adversary $\adv$ has a negligible additional advantage over $\gamma$ in winning the dataset indistinguishability game $\dindgame$. Formally, for a security parameter $\lambda$ and $\gamma \geq 0$, there exists a negligible function $\negl$ such that $\adv$ wins $\dindgame$ with probability at most $\frac{1}{2} + \gamma + \negl(\lambda)$, where the probability is taken over all randomness used by $\mathcal{A}$ and $\Pi$.

\end{itemize}
\end{definition}

Intuitively, a PAFL scheme is one that is correct and provides anonymity and dataset privacy for clients. Ideally, correctness guarantees that the outcome of a PAFL scheme (i.e., the final trained model) is identical to what will be produced by a training scheme that gets full access to the clients' datasets. Anonymity means that no one can tell whether a client has registered or participated in any training iteration. In other words, a submitted model updates, or any other information a client uses for registration, cannot be traced back to this client. Dataset privacy means that no additional information will be leaked about the private datasets of honest clients beyond any prior knowledge the adversary has.

To make our definition more general, we account for the use of non-cryptographic privacy techniques, such as DP, that may result in accuracy and privacy loss. We do that by parameterizing our definition with $\alpha$ and $\gamma$ standing for correctness (or accuracy loss) and indistinguishability (or privacy loss) parameters, respectively. Having $\alpha = \gamma = 0$ reduces to the ideal case in which $\mathbf{M}_{actual} = \mathbf{M}_{\Pi}$, and an adversary has negligible advantage in breaking anonymity and data set privacy. The bounds for these parameters are derived based on the non-cryptographic privacy technique employed in the system.

We define two security games to capture anonymity and dataset privacy, denoted as $\anongame$ and $\dindgame$, respectively, and the interfaces offered by a PAFL scheme. All parties receive the security parameter $\lambda$, and are given an oracle access to $\pafl$. $\pafl$ maintains the state of the system, including the set of registered clients and aggregators, and any additional information recorded on the board. Since the goal is to protect clients from the model owner $S$, we assume $\adv$ controls $S$ and any subset of clients and aggregators. That is, $\adv$ can register any client or aggregator committee $\agg$ in the system, and can corrupt any of the registered clients and aggregators. $\pafl$ supports the following query types:
\begin{itemize}
\itemsep-0.2em
\vspace{-4pt}
\item $(\setup, 1^{\lambda})$: takes the security parameter as input and sets up the system accordingly---creating the bulletin board, the public parameters needed by all parties/cryptographic building blocks, and the bounds/parameters needed by any additional non-cryptographic privacy technique employed in the system. This command can be invoked only once.

\item $(\register, p, \aux)$: registers party $p$ that could be a client or an aggregators committee. The field $\aux$ specifies the party type and its input: if $p$ is a client, then $\aux$ will include its certified dataset and the certification information, while if $p$ is an aggregator committee, $\aux$ will include the committee's public key. This command can be invoked anytime and as many times as desired. 

\item $(\train, \cl, \aux)$: instructs a (registered) client $\cl$ to train the model using its dataset and submit the model updates. The field $\aux$ defines the dataset that belongs to $\cl$ (the exact information is based on $\Pi$). This command can be invoked anytime and as many times as desired.

\item $(\access)$: returns the updated model (after aggregating all submitted individual model updates received in an iteration). For any iteration, this command can be invoked only once and only at the end of that iteration.

\item $(\corrupt, p, \aux)$: This allows $\adv$ to corrupt party $p$, which could be a client or an aggregator. If $p$ is a client, then $\aux$ will be the registration information of this client (e.g., in \sysname, it is the dataset commitment as we will see later), while if $p$ is an aggregator, $\aux$ will be the public key of that party. This command can be invoked at anytime and as many times as $\adv$ wishes. 
\end{itemize}

Note that the notion $\cl$ only represents the type of a party to be a client, it does not contain its real identity.

Accordingly, the $\anongame$ proceeds as follows:
\begin{enumerate}
\itemsep-0.2em
\vspace{-6pt}
\item $b \xleftarrow{\$} \{0, 1\}$
\item $\stt \leftarrow (\setup, 1^{\lambda})$
\item $(\cl_0, \aux_0, \cl_1, \aux_1) \leftarrow \adv^{\pafl}(1^{\lambda}, \stt)$
\item $(\train, \cl_b, \aux_b)$
\item $\adv$ continues to have access to $\pafl$
\item At the end, $\adv$ outputs $b'$, if $b' = b$ and:
\begin{enumerate}
\itemsep-0.3em
\vspace{-4pt}
    \item both $\cl_0$ and $\cl_1$ are honest,
    \item and there is at least two honest clients participated in every training iteration, and that these clients remained to be honest until the end of the game,
\end{enumerate}
\vspace{-4pt}
then return 1 (meaning that $\adv$ won the $\anongame$), otherwise, return 0.
\end{enumerate}

Note that $\adv$ has access to the current state of the system at anytime, and can see the updated bulletin board after the execution of any command. Also, $\adv$ can see all messages sent in the system, and can access the updated model at the end of any iteration, before and after submitting the challenge. For the chosen clients, $\aux_i$ represents their registration information (which does not include the client identity or its actual dataset $D_i$).\footnote{If the adversary knows the dataset of a client, then it knows that this client is part of the population, i.e., this client suffers from illness, for example; there is no point of hiding whether that client participated in training or not.} Since $\adv$ can choose any two clients for the challenge, $\anongame$ also captures anonymity of registration. That is, if the registration information can be linked to a model update submission, then $\adv$ can always win the game.

$\anongame$ game includes several conditions to rule out trivial attacks. The two clients that $\adv$ selects must be honest, otherwise, if any is corrupt, it will be trivial to tell which client was chosen. Furthermore, since the aggregated model updates is simply the summation of these updates, if only $\cl_b$ participates in the iteration during which the challenge command is executed, it might be trivial for $\adv$ to win (same for any other iteration if only one honest client participates). This is because $\adv$ can access the updated model at the end of that iteration and can extract $\cl_b$'s individual updates. Thus, we add the condition that there must be at least two honest clients have participated in any iteration, and these have to be honest until the end of the game. (Also, any $\agg$ can have at most $n-t$ corrupt parties, which is implicit in $\adv$'s capabilities.)

The $\dindgame$ proceeds as follows:
\begin{enumerate}
\itemsep-0.3em
\vspace{-6pt}
\item $b \xleftarrow{\$} \{0, 1\}$
\item $\stt \leftarrow (\setup, 1^{\lambda})$
\item $(D_0, \aux_0, D_1, \aux_1) \leftarrow \adv^{\pafl}(1^{\lambda}, \stt)$
\item $(\register, \cl, (D_b, \aux))$, $(\train, \cl, \aux)$
\item $\adv$ continues to have access to $\pafl$
\item $\adv$ outputs $b'$, if $b' = b$, and: 
\begin{enumerate}
\itemsep-0.2em
\vspace{-4pt}
    \item $\cl$ is honest,
    \item and there is at least two honest clients participated in every training iteration, and that these clients remained to be honest until the end of the game,
\end{enumerate}
\vspace{-4pt}
then return 1 (meaning that $\adv$ won the $\dindgame$), otherwise return 0.
\end{enumerate}

This game follows the outline of $\anongame$, but with a different construction of the challenge command to reflect dataset privacy. In particular, $\adv$ chooses two valid datasets (the field $\aux$ contains all information required to verify validity). The challenger chooses one of these datasets at random (based on the bit $b$), queries $\pafl$ to register a client using this dataset, and then instructs this client to train the model using the dataset $D_b$. Note that $\aux$ in line 4 is the registration information of the client constructed based on the $\Pi$ scheme. $\adv$ continues to interact with the system, and access the updated model at the end of any iteration. As before, conditions are added to rule out trivial attacks. $\adv$ wins the game if it guesses correctly which of the datasets was chosen in the challenge. Being unable to guess after seeing the outcome of the challenge $\train$ command, and even after accessing the aggregated model updates, means that a PAFL schemes does not reveal anything about the underlying datasets.

\begin{remark} Our PAFL notion can be further generalized to have only the model owner $S$, i.e., no aggregators, so this party is the one who aggregates the individual model updates as well. Moreover, if preserving the privacy of the model is required, then our notion can be extended with a model privacy property to reflect that. Since model privacy is outside the scope of this work, we did not include this property to keep the definition simple.
\end{remark}

\begin{remark}
It should be noted that our definition of anonymity (and so our scheme that satisfies this notion) does not leak negative information (i.e., a client has not participated in training). Both participation and the absence of participation are protected, i.e., identities of those who participate or do not participate are not revealed. 
\end{remark}
\section{Building Blocks}
\label{sec:prelim}
In this section, we provide a brief background on the building blocks employed in \sysname, covering  all cryptographic primitives that we use and the technique of differential privacy.

\vspace{4pt}
\noindent{\bf Commitments.} A cryptographic commitment scheme allows hiding some value that can be opened later. It consists of three $\ppt$ algorithms: $\mathsf{Setup}$, $\mathsf{Commit}$, and $\mathsf{Open}$. On input the security parameter $\lambda$, $\mathsf{Setup}$ generates a set of public parameters $\mathsf{pp}$. To commit to a value $x$, the committer invokes $\mathsf{Commit}$ with inputs $\mathsf{pp}$, $x$, and randomness $r$ to obtain a commitment $c$. $\mathsf{Open}(\mathsf{pp},c)$ opens a commitment by simply revealing $x$ and $r$. Anyone can verify correctness of opening by computing $c' = \mathsf{Commit}(\mathsf{pp}, x, r)$ and check if $c = c'$.

A secure commitment scheme must satisfy: \emph{hiding}, meaning that commitment $c$ does not reveal any information about $x$ beyond any pre-knowledge the adversary has, and \emph{binding}, so a commitment $c$ to $x$ cannot be opened to another value $x' \neq x$ (formal definitions can be found in~\cite{goldreich07}). These security properties enable a party to commit to their inputs (i.e., private datasets in our case), and publish the commitment publicly without exposing the private data.

\vspace{4pt}
\noindent{\bf Threshold homomorphic encryption.} Homomorphic encryption allows computing over encrypted inputs and producing encrypted outputs. Such operations include homomorphic addition and multiplication. That is, let $ct_1$ be a ciphertext of $x_1$, and $ct_2$ be a ciphertext of $x_2$, then $ct_1 + ct_2$ produces a ciphertext of $x_1 + x_2$, and $ct_1 \cdot ct_2$ produces a ciphertext of $x_1 \cdot x_2$ (the exact implementation of the homomorphic '+' and '$\cdot$' vary based on the encryption scheme). Some encryption schemes support only one of these operations, e.g., Paillier scheme~\cite{paillier1999public} is only additively homomorphic. Supporting both addition and multiplication leads to fully homomorphic encryption~\cite{bgv,mukherjeewichs}. Since we focus on secure aggregation of model updates, we require only additive homomorphism.

A homomorphic encryption scheme is composed of three $\ppt$ algorithms: $\mathsf{KeyGen}$ that generates encryption/decryption keys (and any other public parameters), $\mathsf{Encrypt}$ which encrypts an input $x$ to produce a ciphertext $ct$, and $\mathsf{Decrypt}$ which decrypts a ciphertext $ct$ to get the plaintext input $x$. Correctness states that $\mathsf{Decrypt}$ produces the original plaintext for any valid ciphertext produced by $\mathsf{Encrypt}$, and in case of homomorphic operations, the correct result (add and/or multiply) is produced. Security is based on the regular security notion for encryption (i.e., semantic security or indistinguishability-based). In this work, we require indistinguishability against CPA (chosen-plaintext attacker).

The threshold capability is related to who can decrypt the ciphertext. To distribute trust, instead of having the decryption key known to a single party, it is shared among $n$ parties. Thus, each of these parties can produce a partially decrypted ciphertext upon calling $\mathsf{Decrypt}$, and constructing the plaintext requires at least $t$ parties to decrypt. In a threshold homomorphic encryption scheme, $\mathsf{KeyGen}$ will be a distributed protocol run by the $n$ parties to produce one public key and $n$ shares of the secret key, such that each party will obtain only her share (and will not see any of the others' shares). In \sysname, we use the threshold Paillier encryption scheme~\cite{damgaard2001generalisation}.

\vspace{4pt}
\noindent{\bf Zero-knowledge proofs.} A (non-interactive) zero-knowledge proof (ZKP) system allows a prover, who owns a private witness $\omega$ for a statement $x$ in language $\mathcal{L}$, to convince a verifier that $x$ is true without revealing anything about $\omega$. A ZKP is composed of three $\ppt$ algorithms: $\mathsf{Setup}$, $\mathsf{Prove}$, and $\mathsf{Verify}$. On input a security parameter $\lambda$ and a description of $\mathcal{L}$, $\mathsf{Setup}$ generates public parameters $\mathsf{pp}$. To prove that $x \in \mathcal{L}$, the prover invokes $\mathsf{Prove}$ over $\mathsf{pp}$, $x$, and a witness $\omega$ for $x$ to obtain a proof $\pi$. To verify this proof, the verifier invokes $\mathsf{Verify}$ over $\mathsf{pp}$, $x$, and $\pi$, and accepts only if $\mathsf{Verify}$ outputs 1. In general, all conditions needed to satisfy the NP relation of $\mathcal{L}$ are represented as a circuit. A valid proof will be generated upon providing valid inputs that satisfy this circuit. Some of these inputs could be public, which the verifier will use in the verification process, while others are private, which constitute the witness $\omega$ that only the prover knows.

A secure ZKP system must satisfy completeness, soundness, and zero-knowledge. Completeness states that any proof generated in an honest way will be accepted by the verifier. Soundness ensures that if a verifier accepts a proof for a statement $x$ then the prover knows a witness $\omega$ for $x$. In other words, the prover cannot convince the verifier with false statements. Finally, zero-knowledge ensures that a proof $\pi$ does not reveal anything about the witness $\omega$. Many ZKP systems added a succinctness property, so that proof size is constant and verification time is linear the input size and independent the NP relation circuit size. These are called zero-knowledge succinct non-interactive argument of knowledge (zk-SNARKs). Formal definitions of ZKP systems can be found in~\cite{goldreich07,bitansky13}. In \sysname, we use the proof system proposed in~\cite{groth2016size}.

\vspace{4pt}
\noindent{\bf Differential privacy.} Differential privacy (DP) is a technique usually used to address attacks such as membership and inference attacks. That is, knowing a data point and the trained model, an attacker can tell if this datapoint was used in training the model. DP provides the guarantee that inclusion of a single instance in the training datasets causes a statistically insignificant change to the training algorithm output, i.e., the trained model. Thus, it limits the success probability of the attacker in membership attacks. Formally, DP is defined as follows~\cite{dwork2014algorithmic} (where $\epsilon > 0$ and $0 < \delta < 1$):

\begin{definition}[Differential Privacy] A randomized mechanism $\mathcal{K}$ provides $(\epsilon, \delta)$-differential privacy, if for nay two datasets $D_0$ and $D_1$ that differ in only single entity, for all $R \subseteq Range(\mathcal{K})$: $\Pr[\mathcal{K}(D_0) \in R] \leq e^{\epsilon}\Pr[\mathcal{K}(D_1) \in R] + \delta$
\end{definition}

We adopt the Guassian DP mechanism as described in~\cite{wei2020federated}, and apply the optimization for the secure aggregation setting in~\cite{truex2019hybrid,kairouz2021distributed}. The clients sample a noise from a Gaussian distribution $\mathcal{N}(0, \sigma^2)$ and add the noise to their model updates before encrypting and submitting them. 
Each client $i$ clips their model gradients $\bm{g}_i$ to ensure $\parallel \bm{g}_i\parallel < C$ where $C$ is a clipping threshold for bounding the norm of gradients. Client $i$ then sets sensitivity $S_f = 2C/|D_i|$ where $D_i$ is the $i-$th client's dataset, assuming gradients are shared after one epoch of local training. The noise scale is set as $\sigma \geq cTS_f/\epsilon$ where $c \geq \sqrt{2\ln{(1.25/\delta)}}$ and $T$ indicates exposures of local parameters (number of epochs or iterations), and satisfies $(\epsilon, \delta)$-DP~\cite{dwork2014algorithmic,abadi2016deep, wei2020federated} for $\epsilon < 1$.
We apply the optimization in~\cite{truex2019hybrid} by dividing the noise scale by the number of participating clients. Since only the aggregation of the updates is revealed, the individual noise value added by a client can be reduced while guaranteeing that the aggregate value maintains the desired level of privacy, leading to better accuracy.

Using these parameters, the advantage $\gamma$ of the attacker in membership attacks can be computed as~\cite{humphries2020investigating}:
\begin{equation}\label{eq:gamma}
    \gamma \leq (1 - e^{-\epsilon} + 2\delta)(e^{\epsilon} + 1)^{-1} 
\end{equation}

As for the error bound $\alpha$, the Gaussian mechanism satisfies an absolute error bound $\alpha \leq O(R\sqrt{\log k})$, where $k$ is the number of queries and $R := S_f \sqrt{k \log 1/\delta}/\epsilon$~\cite{dwork2006our,dagan2022bounded} (a query here indicates one training round, so $k$ is the number of training iterations, and $S_f$ is the query sensitivity, which is the sensitivity of the training function). We use these bounds in our security proofs in Appendix~\ref{app:proof}. 

In describing our design, we use DP as a blackbox. A client invokes function called $\mathsf{DP.noise}$ to sample the appropriate noise value. This makes our design modular as any secure DP-mechanism, other than the one we employ, can be used.
\section{\sysname Design}
\label{sec:design}
\sysname relies on four core ideas to achieve its goals: certification of clients' datasets to prevent impersonation, anonymity sets to support anonymous registration and model training, a designated aggregators committee to prevent accessing the individual model updates submitted by the clients, and a public bulletin board to facilitate indirect communication and logging. In this section, we present the design of \sysname showing the concrete techniques behind each of these ideas and how they interact with each other.

\subsection{System Model}
\label{sec:sys-model}
As shown in Figure~\ref{fig:anofel-model}, for any learning task there is a model owner $S$, a model aggregator set (or committee) $\agg$ of size $n$ (we use a committee instead of a single aggregator to distribute trust), and a set of clients $\clients$ who wants to participate in this learning task. During a training iteration, $\clients$ will train the initial model locally over their private datasets and publish encrypted updates. $S$ has to wait until the iteration is finished, after which $\agg$ will aggregate the updates and grant $S$ access to the aggregated value. \sysname can support any aggregation method based on summation or averaging (e.g., FedSGD~\cite{chen2016revisiting} and FedAVG~\cite{mcmahan2017communication}) given that the number of participants is public. All parties have access to a public bulletin board, allowing them to post and retrieve information about the training process. There could be several learning tasks going on in the system, each with its own $S$, $\agg$, and $\clients$, and all are using the same bulletin board.\footnote{In fact, it could be the case that the same parties are involved in several learning tasks, but they track each of these separately.} In the rest of this section, we focus on one learning task to explain our protocol; several learning tasks will separately run the same protocol between the involved parties.

\begin{figure}[t!]
\centerline{
\includegraphics[height= 3.0in, width = 0.9\columnwidth]{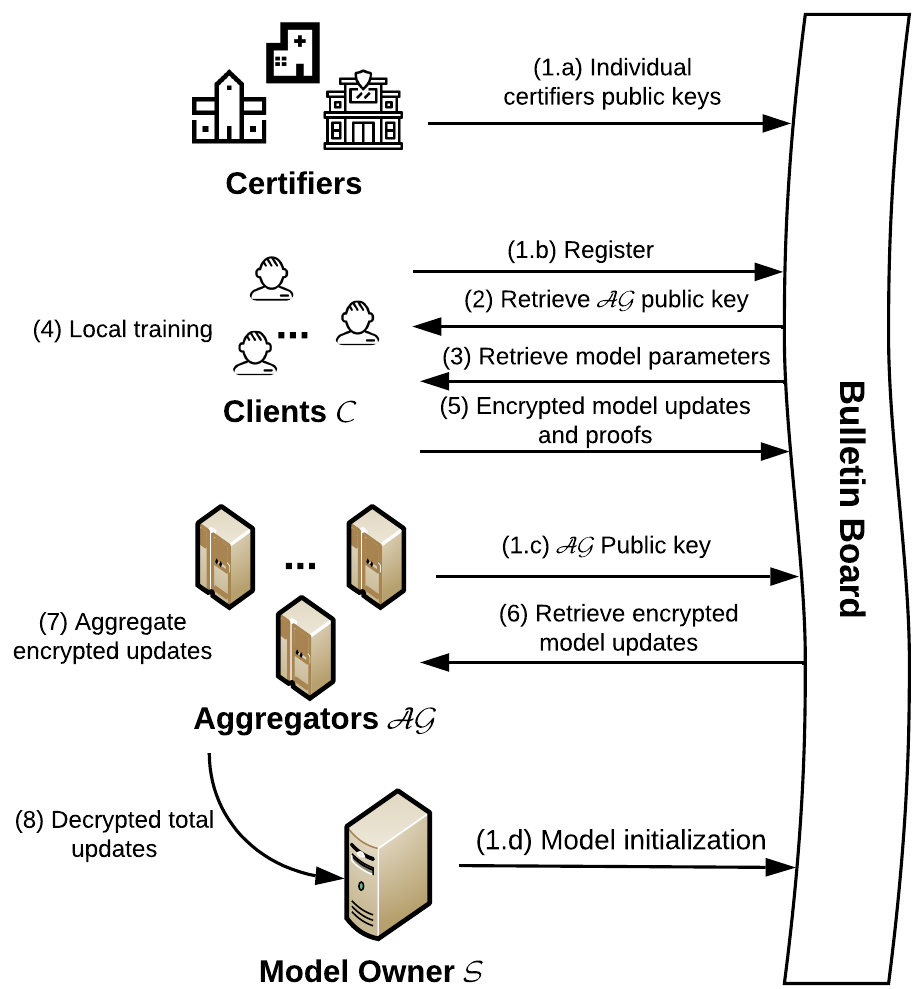}}
\caption{\sysname system architecture. All these steps (except 1.a - 1.c, and 2) will be repeated for each training iteration.} 
\label{fig:anofel-model}
\end{figure}

\subsection{Threat Model} 
We assume a secure and immutable public bulletin board available to all parties, which accepts only authenticated information that complies with predefined correctness rules.\footnote{This can be instantiated in a decentralized way using a blockchain with miners, or validators, verifying correctness.} 

We adopt the following adversary model (we deal with $\ppt$ adversaries). For clients, we assume them to be malicious during registration (a malicious party may behave arbitrarily), while we assume these clients to be semi-honest during training (a semi-honest party follows the protocol but may try to collect any additional information). Thus, during registration, a client with an invalid (or poisoned) dataset may try to register, while during training, registered clients will use their valid (registered) datasets in training and submit valid updates. We assume the server $S$ to be always malicious, so it may try to impersonate clients in the registration phase, collude with aggregators during the training phase, or manipulate the model posted at the beginning of each iteration. For $\agg$, we assume that at maximum $n - t$ parties can be malicious, where $t$ is the threshold required for valid decryption. 

Since our goal is supporting anonymity, we do not assume any authenticated communication channels. Lastly, we work in the random oracle model where hash functions are modeled as random oracles.

\subsection{System Workflow}
\label{sec:sys-workflow}
\sysname achieves anonymity and privacy by combining a set of cryptographic primitives, such as threshold homomorphic encryption and zero-knowledge proofs, differential privacy, and a public bulletin board. The latter is used to facilitate indirect communication between the parties and to create anonymity sets to disguise the participants. Our techniques of combining zero-knowledge proofs and anonymity sets are inspired by recent advances in private and anonymous cryptocurrencies~\cite{zerocash,quisquis,zether}. 

Each client must register before participation (step 1.b in Figure~\ref{fig:anofel-model}) by publishing on the board a commitment to the master public key and dataset this client owns (the commitments are never revealed and don't leak any information). Similarly, the aggregators $\agg$ must register by posting their public key on the board (step 1.c in Figure~\ref{fig:anofel-model}), which will be used by the clients to encrypt their model updates. As we use DP to protect against membership and inference attacks, the client samples a noise and adds it to their  updates before encrypting them. The encrypted updates will be accompanied with a zero-knowledge proof (ZKP) attesting that a client is a legitimate and registered data owner, but without revealing the public key or the dataset commitment of this client. Thus, anonymity is preserved against everyone (the server, aggregators, other clients, or any other party).

As shown in the figure (steps 1.d and 3), a server publishes the initial model updates on the bulletin board, which are retrieved by the clients to be used in the local training. The use of this immutable public bulletin board avoids direct communication between the server and clients, which would compromise anonymity. Moreover, it addresses privacy attacks resulting from distributing different initial model parameters to clients (this type of attacks has been recently demonstrated in~\cite{pasquini2021eluding}). In fact, \sysname is the only private federated learning system that enjoys this advantage.

To prevent Sybil and data poisoning attacks, each dataset must be certified by its source, e.g., a hospital. Since exposing the certifier reveals the dataset type and impacts privacy, \sysname creates an anonymity set for the certifiers by having all their public keys posted on the board (step 1.a in Figure~\ref{fig:anofel-model}). Thus, during the registration, a client will provide a ZKP that its hidden (committed) dataset is signed by one of the certifiers---without specifying which one.

\begin{figure*}[ht!]
\centerline{
\includegraphics[height= 2.2in, width = 0.9\textwidth]{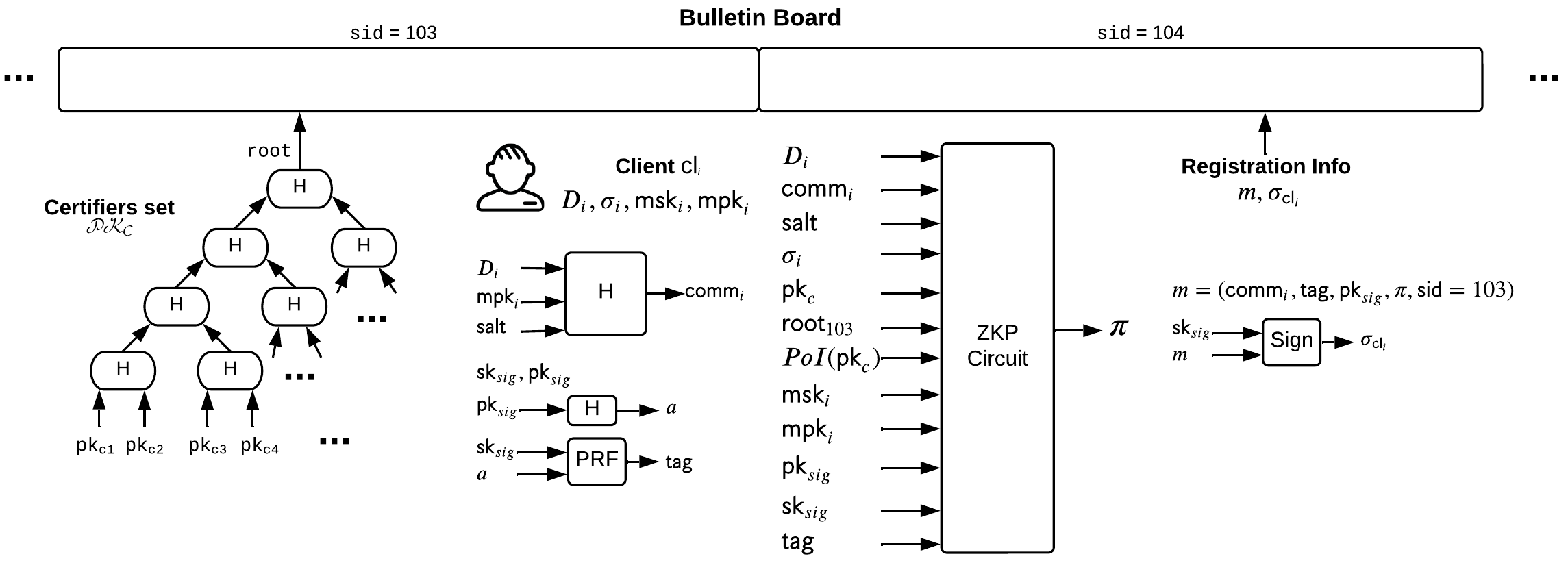}}
\vspace{-8pt}
\caption{Client setup---anonymous registration process.} 
\label{fig:client-reg}
\end{figure*}

Furthermore, to hide which training activity a client is participating in, which if revealed it will expose the dataset type, \sysname creates an anonymity set for the aggregators. That is, the system will have several ongoing training activities, each of which with its own $\agg$ committee. When submitting a model update, the client will choose a subset of these aggregators including the target $\agg$ who is managing the training activity the client is interested in. The client then encrypts the updates under the public keys of this subset---encrypt the actual updates for the target $\agg$ while encrypt 0 for the rest of the aggregators. Consequently, even if it is revealed that a client has submitted a model update, this will not expose which training activity this client is part of.

Accordingly, \sysname proceeds in three phases: setup, model training, and model access, which we discuss below.

\subsubsection{Setup}
\label{subsec:setup}
The initial system setup includes creating the bulletin board, generating all public parameters needed by the cryptographic primitives used, and configuring the parameters for noise distribution and privacy/accuracy loss bounds of DP that all parties will use. Then, the certifiers, aggregators, and each client must run the setup process.\footnote{A PKI is needed to ensure the real identities of the certifiers, aggregators, and model owner. So, when any of these entities posts its public key on the board, this must be accompanied with a certificate (from a certificate authority) to prove that indeed the party owns this key.}

\vspace{4pt}
\noindent\textbf{Certifiers.} Each certifier posts its public key on the board. Let $\mathcal{PK}_C$ be the set of all certifiers' public keys. 

\vspace{4pt}
\noindent\textbf{Aggregators.} $\agg$ run the setup of a threshold homomorphic encryption to generate a public key and shares of the secret key. They post the public key on the board, while each party keeps its secret key share. To authenticate the public key, at least $t$ of these aggregators must post it on the board (each posting is signed using the aggregator's own public key).\footnote{The individual public keys of $\agg$ are managed by a PKI.} 

\vspace{4pt}
\noindent\textbf{Clients.} The setup process for clients is more involved compared to the previous entities. This is a natural result of supporting anonymity. As shown in Figure~\ref{fig:client-reg}, which is the detailed version of step 1.b in Figure~\ref{fig:anofel-model}, each client $\cl_i \in \clients$, with a dataset $D_i$ and a master keypair $(\msk_i, \mpk_i)$,\footnote{Clients need a PKI for their master public keys, so a certifier can check that a client owns the presented master public key. However, to preserve anonymity, these are hidden in the training process, and a certifier cannot link an encrypted model update to the master public keys.} obtains a certificate $\sigma_i$ from its certifier---$\sigma_i$ could be simply the certifier's signature over $D_i \concat \mpk_i$. Then, $\cl_i$ commits to its dataset $D_i$ (without revealing anything about it) as follows: 
\begin{itemize}
\itemsep-0.3em
\vspace{-4pt}
\item Compute a commitment to $D_i$ and $\mpk_i$ as $\comm_i = H(D_i \concat \mpk_i \concat \salt)$, where $H$ is a collision resistant hash function and $\salt$ is a fresh random string in $\zo^{\lambda}$. 

\item Generate a fresh digital signature keypair $(\pk_{sig}, \sk_{sig})$. Compute $a = H(\pk_{sig})$ and $\taag = \prf_{\msk_i}(a)$, where $\prf$ is a pseudorandom function, and $\taag$ serves as an authentication tag over the fresh key to bind it to the master keypair of the client. Similar to~\cite{zerocash}, we instantiate the PRF as $\prf_{\msk_i}(a) = H(\msk_i \concat a)$. 

\item Generate a ZKP $\pi$ to prove that the dataset is legit and owned by $\cl_i$. In particular, this ZKP attests to the following statement (again without revealing anything about any of the private data that the client knows): given a commitment $\comm_i$, a signature verification key $\pk_{sig}$, a tag $\taag$, and a bulletin board state index $\sid$, client $\cl_i$ knows a dataset $D_i$, randomness $\salt$, master keypair $(\mpk_i, \msk_i)$, a certifier's key $\pk_c$, a certificate $\sigma_i$, and a signing key $\sk_{sig}$, such that: 
\begin{enumerate}
\itemsep-0.3em
\vspace{-4pt}
    \item $D_i$, $\mpk_i$, and $\salt$ are a valid opening for the commitment $\comm_i$, i.e., $\comm_i = H(D_i \concat \mpk_i \concat \salt)$.\footnote{Note that the opening of the commitment is a private input the client uses to generate the proof (and same applies to the training phase as we will see shortly). This opening is never revealed to the public---The board only has the commitment, and a ZKP on its well-formedness that does not leak any information about any private data used locally to generate the proof.}
    \item $\sigma_i$ has been generated using $\pk_c$ over $D_i \concat \mpk_i$, and that $\pk_c \in \PK_C$ with respect to the set $\PK_C$ registered on the board at state with index $\sid$.
    \item The client owns $\mpk_i$, i.e., knows $\msk_i$ that corresponds to $\mpk_i$.
    \item The tag over the fresh $\pk_{sig}$ is valid, i.e., compute $a = H(\pk_{sig})$ and check that $\taag = \prf_{\msk_i}(a)$.
\end{enumerate}

\item Sign the proof and the commitment: set $m =(\comm_i,\taag,$ $\pk_{sig}, \pi, \sid)$, use $\sk_{sig}$ to sign $m$ and obtain a signature $\sigma_{\cl_i}$.

\item Post $(m, \sigma_{\cl_i})$ on the bulletin board.
\end{itemize}

\begin{figure*}[ht!]
\centerline{
\includegraphics[height= 2.15in, width = 0.93\textwidth]{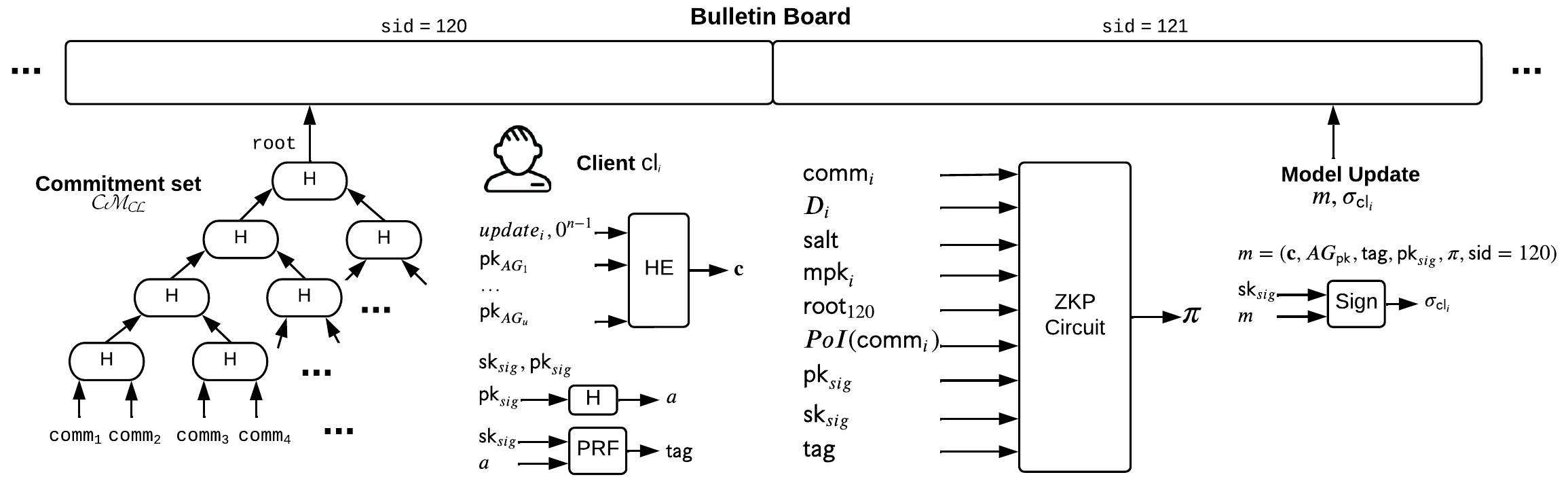}}
\vspace{-4pt}
\caption{Model training---anonymous model updates submission ($0^{n-1}$ means $n-1$ zeros, and HE is homomorphic encryption).} 
\label{fig:training}
\end{figure*}

As for verifying the third condition, i.e., the client knows $\msk_i$, it is simply done by computing the public key based on the input $\msk_i$ and checking if it is equal to $\mpk_i$. Although $\mpk_i$ is not recorded explicitly on the bulletin board, it is bound to the client since it is part of the dataset commitment $\comm_i$ certified by $\sigma_i$.

To allow efficient proof generation with respect to the anonymity set $\PK_C$, a Merkle tree is used to aggregate the key set $\PK_C$ as shown in Figure~\ref{fig:client-reg}. Proving that a key $\pk_c \in \PK_C$ is done by showing a proof of inclusion (PoI) of that key in the tree. In other words, the circuit underlying the ZKP generation takes a membership path of the key and the tree root and verifies the correctness of that path. Thus, the cost will be logarithmic in the set size. The tree can be computed by the entities maintaining the board, with the root published on the board to allow anyone to use it when verifying the ZKP.

Note that a ZKP is generated with respect to a specific state of the anonymity set $\PK_C$. This state is the root of the Merkle tree of this set, which changes when a new certifier joins the system. Such change will invalidate all pending ZKPs, and thus, invalidate all pending client registrations tied to the older state. To mitigate this, a client should specify the state index $\sid$ based on which the ZKP (and hence, the Merkle tree) was generated. So if the board is a series of blocks as in Figure~\ref{fig:client-reg}, $\sid$ is the block index containing the root used in the proof.

All these conditions are modeled as an arithmetic circuit. The client has to present valid inputs that satisfy this circuit in order to generate a valid ZKP. Only registration with valid ZKPs will be accepted, where we let $\CM_{CL}$ be the set of all valid clients' commitments. Registration integrity is preserved by the security of the digital signature $\sigma_{\cl_i}$: if an adversary tampers with any of the information that a client submits---$\comm_i$, $\taag$, $\pk_{sig}$, $\pi$, or $\sid$---this will invalidate $\sigma_{\cl}$ and will lead to rejecting the registration.

Note that clients can perform the setup at anytime, and once their registration information is posted on the board, they can participate in the model training immediately. Thus, \sysname allows clients to join at anytime during the training process, and each client can perform the setup phase on its own.

\subsubsection{Model Training} 
\label{subsec:training}
At the beginning of each training iteration, $S$ posts the initial model parameters on the bulletin board (step 1.d in Figure~\ref{fig:anofel-model}). Each client $\cl_i$ retrieves them and trains the model locally over its dataset (step 3 and 4 in Figure~\ref{fig:anofel-model}, respectively). Then $\cl_i$ shares the model updates privately and anonymously without revealing (1) which training activity it is participating in nor (2) the dataset commitment it owns (i.e., without revealing its identity). Client $\cl_i$ does that as follows (see Figure~\ref{fig:training}, which is the detailed version of step 5 in Figure~\ref{fig:anofel-model}):
\begin{itemize}
\itemsep-0.3em
\vspace{-4pt}
\item Choose a set of aggregators $AG = \{\agg_1, \dots, \agg_u\}$ including the target $\agg$, with $AG_{\pk}$ denoting the public keys of these aggregators. Shuffle $AG$ to avoid any ordering attacks (e.g., if the target $\agg$ is always placed first, this reveals the target training activity).

\item Invoke $\textsf{DP.noise}$ to sample a noise value to be added to the model updates. As mentioned earlier, we apply the optimization in~\cite{truex2019hybrid} by dividing the noise scale by the number of participating clients since the model updates will be decrypted after being aggregated. 

\item Encrypt the model updates under the target $\agg$ public key, while encrypt 0 under the public keys of the rest. This will produce $\mathbf{c}$; a vector of $u$ ciphertexts.\footnote{A client will have a fixed $AG$ selected at the beginning. Changing $AG$ between iterations must be done carefully; for a new $AG'$, if $AG \cap AG' = \agg$, it would be trivial to tell which training activity a client is part of.}

\item Generate a fresh digital signature keypair $(\pk_{sig}, \sk_{sig})$. Compute $a = H(\pk_{sig})$ and $\taag = \prf_{\msk_i}(a)$.

\item Produce a ZKP $\pi$ (with respect to the current state of the board at index $\sid$) attesting that: $\cl_i$ is a legitimate owner of a dataset, and that the fresh digital signature key was generated correctly. Thus, this ZKP proves the following statement: given a signature key $\pk_{sig}$, and a tag $\taag$, $\cl_i$ knows the opening of some commitment $\comm \in \CM_{CL}$ (this proves legitimacy), and that $\taag$ was computed correctly over $\pk_{sig}$ as before. Since we adopt the semi-honest adversary model during the training phase, there is no need to prove that $\mathbf{c}$ has only one non-zero update.\footnote{Note that although we assume semi-honest clients during training, we still need the ZKP above to preserve integrity and make sure only registered clients participate. That is, a malicious adversary (who could be the server) may impersonate a client during training (without doing any registration), and it may alter the submitted updates (thus we need to prove that the signature is honestly generated by a registered client).}

\item To preserve integrity, sign the proof, the ciphertext, and the auxiliary information. That is, set $m = (\mathbf{c}, AG_{\pk}, \taag, \pk_{sig}, \sid, \pi)$ and sign $m$ using $\sk_{sig}$ to a produce a signature $\sigma_{\cl_i}$.

\item Post $(m, \sigma_{\cl_i})$ on the bulletin board.
\end{itemize}

We use the Merkle tree technique to aggregate the commitment anonymity set $\CM_{CL}$. A client provides a proof of inclusion of its commitment in the Merkle tree computed over $\CM_{CL}$ with respect to a specific state indexed by $\sid$. The latter is needed since we allow clients to join anytime, and thus, $\CM_{CL}$, and its Merkle tree, will change over time.

Based on the above, \sysname naturally supports dynamic client participation. As mentioned before, a client who wants to join can do that immediately after finishing the setup. While (registered) clients who do not wish to participate in a training iteration simply do not send any updates. \sysname does not need a recovery protocol to handle additions/dropouts since the setup of a client does not impact the setup of the system, $\agg$, or other clients. Also, the model updates submitted by any client do not impact the updates submitted by others. Not to mention that any information needed to perform the setup is already on the bulletin board, so interaction between the parties is needed. Furthermore, submitting model updates is done in one shot; a client posts $(m, \sigma_{\cl_i})$ on the board. Since we use non-interactive ZKPs, $\agg$, and any other party, can verify the proof on their own.

\subsubsection{Model Access}
\label{subsec:access}
At the end of each training iteration, $\agg$ members retrieve all client updates---those that are encrypted under her public key---from the board, and aggregate them using the additive homomorphism property of the encryption scheme (steps 6 and 7 in Figure~\ref{fig:anofel-model}, respectively). Then, each member decrypts the  ciphertext using its secret key share, producing a partial decryption that is sent to $S$ (step 8 in Figure~\ref{fig:anofel-model}). Once $S$ receives at least $t$ responses, it will be able to construct the plaintext of the aggregated model updates and start a new iteration.

Signaling the end of a training iteration relies on the bulletin board. Adding a future block with a specific index will signal the end. Thus, the system setup will determine the block index of when training starts, and the duration of each iteration (in terms of number of blocks). Since all parties have access to the board, they will be able to know when each iteration is over. Another approach is to simply have $S$ post a message on the board to signal the end of each iteration.

Although \sysname is a system for federated learning that involves several parties, it is not considered an interactive protocol. These parties do not communicate directly with each other---the bulletin board mediates this communication. When sending any message, the sender will post it on the bulletin board, and the intended recipient(s) will retrieve the message content from the board.

\begin{remark}
The concrete instantiation of the bulletin board impacts runtime. The board mediates communication between parties and must verify the validity of all information postings before accepting them. To speed up this process, the board be formed from a sequence of blocks of information maintained by a committee of registered validators (similar to a blockchain but without the full overhead a regular blockchain introduces). Alternatively, a regular blockchain that takes advantage of recent optimized designs and scalability techniques
~\cite{wang2019sok,gilad2017algorand,bagaria2019prism} can be adopted. We note that the concrete instantiation of the board is outside the scope of this work.
\end{remark}

\subsection{Extensions}
\label{subsec:extensions}
\noindent{\bf Addressing a stronger adversary model.}
\sysname assumes semi-honest clients in the training phase. Therefore, mitigating threats of (1) using a legitimate (registered and certified) dataset in a training activity of totally different type---e.g., use medical data to train a model concerned with vehicles, and (2) encrypt non-zero values to other aggregators (other than the target $\agg$) in $AG$ to corrupt their training activities, are out of scope. Nevertheless, we can make our adversary model stronger by considering a semi-malicious client who may attempt these attacks. 

To mitigate the first attack, we require the certifier to add a dataset type $\dt$ to the dataset certificate, and hence, we require the client to prove that it has used a dataset with the correct type in training. That is, each training activity in the system will have a designated type $\dt$, and a certifier will check that a dataset $D_i$ is indeed of type $\dt$ (so it can be used to train any model of type $\dt$) before signing $D_i \concat \mpk_i \concat \dt$. Also, the ZKP circuit a client uses in training must check that the target $\agg$ (that will receive a ciphertext of non-zero value) is managing a training activity with an identical $\dt$. Otherwise, a valid proof cannot be generated. To mitigate the second attack, we can add another condition to be satisfied in the ZKP circuit (again the one a client uses during training) to validate the ciphertext $\mathbf{c}$. That is, verify that only one ciphertext in $\mathbf{c}$ is for a non-zero value while the rest are zeros. This requires providing the ZKP circuit with all plaintexts and the randomness used in encryption, so it can the ciphertexts based on these, denoted as $\mathbf{c}'$, and check if $\mathbf{c}' = \mathbf{c}$.

Addressing malicious clients during training, i.e., these who may deviate arbitrarily from the protocol, can be done in a generic way using ZKPs as in~\cite{burkhalter2021rofl}. That is, a model update will be accompanied with ZKP proofs on well-formedness, meaning that the registered dataset and the actual initial model parameters were used and training was done correctly. Extending \sysname to support that while preserving its efficiency level is part of our future work.

\vspace{4pt}
\noindent{\bf Reducing storage costs of the bulletin board.} ML models may involve thousands of parameters. The server needs to post the initial values of these parameters on the board for each training iteration. Also, a client posts the updated version of all these parameters on the board. This is a large storage cost that may create a scalability problem, e.g., if the board is a blockchain this cost could be infeasible. To address this issue, we can employ any of the solutions currently used by the blockchain community, e.g., store the (ciphertext of) model parameters on a decentralized storage network, and post only the hash of them on the board (with a pointer to where the actual data is stored). Furthermore, once the data is used, i.e., a training iteration concluded, initial model parameters and all clients updates can be discarded, which reduces the storage cost significantly. Note that the initial model parameters are posted by the server, who is not anonymous, while the encrypted updates are posted by clients. Thus, an anonymous off-chain storage must be used (like an anonymous sidechain) to avoid compromising their anonymity.

\subsection{Security of \sysname}
\label{sec:security}
\sysname realizes a correct and secure PAFL scheme based on the notion defined in Section~\ref{sec:security-def}. In Appendix~\ref{app:proof}, we formally prove the following theorem:

\begin{theorem}\label{theorem:pafl-sec}
The construction of \sysname as described in Section~\ref{sec:design} is a correct and secure PAFL scheme (cf. Definition~\ref{def:pafl}).
\end{theorem}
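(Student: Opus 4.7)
The plan is to establish the three PAFL properties separately, reducing each to the security of an underlying building block (hash-based commitment, threshold Paillier, Groth16, PRF/random oracle) or directly to the DP analysis. Correctness is the most direct step: every cryptographic component (commitment opening, ZKP verification, homomorphic summation, threshold decryption) is functionally correct, so the only source of deviation between $\mathbf{M}_{\Pi}$ and $\mathbf{M}_{actual}$ is the Gaussian noise that each client injects before encryption. Invoking the bound recalled in Section~\ref{sec:prelim}, namely $\alpha \leq O(R\sqrt{\log k})$ with $R = S_f\sqrt{k\log 1/\delta}/\epsilon$, yields the required $\error(\mathbf{M}_{actual}, \mathbf{M}_{\Pi}) \leq \alpha$ with high probability.

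For anonymity, I will proceed by a sequence of hybrids bridging $\anongame$ with $b=0$ and with $b=1$, bounding each transition by the advantage of a corresponding cryptographic adversary. The key hops on the challenge execution are: (i) replace the ZKP $\pi$ that $\cl_b$ outputs during $\train$ with a simulated proof, using the zero-knowledge simulator; (ii) replace $\taag = \prf_{\msk_b}(a)$ with a uniformly random string, justified by PRF security together with random-oracle programming on the fresh input $a = H(\pk_{sig})$; (iii) switch the ciphertexts in $\mathbf{c}$ that encrypt $\cl_b$'s noised gradients under the target aggregator's public key to encryptions of a canonical fixed plaintext, relying on IND-CPA of the threshold HE, which is available because $\adv$ controls at most $n-t$ aggregator key shares; and (iv) observe that the fresh signature keypair, the shuffled set $AG$, and the signature $\sigma_{\cl_i}$ are drawn identically regardless of $b$. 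After these hops, $\adv$'s challenge view is statistically independent of $b$, and the $\anongame$ requirement that at least two honest clients participate in each iteration ensures that the revealed aggregate never isolates $\cl_b$'s contribution.

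For $\gamma$-dataset privacy, I will run an analogous hybrid argument on the challenge $(\register, \cl, (D_b, \aux))$ followed by $(\train, \cl, \aux)$: first, replace $\comm = H(D_b \concat \mpk \concat \salt)$ with a commitment to a dummy value, using hiding of the hash-based commitment in the random oracle model; second, simulate the accompanying registration ZKP; third, switch the training ciphertexts of $D_b$-derived noised gradients to $D_0$-derived noised gradients under IND-CPA. The only residual dependence of $\adv$'s view on $b$ is then through the decrypted aggregated model released at the end of each iteration. Plugging the $(\epsilon,\delta)$-DP parameters of Section~\ref{sec:prelim} into the DP-to-indistinguishability translation of~\cite{humphries2020investigating} bounds this residual advantage by exactly the $\gamma$ of Equation~\eqref{eq:gamma}, as required.

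The main obstacle I expect is the last step of the dataset-privacy reduction: carefully lining up the DP guarantee (a property of cleartext aggregated outputs) with the game-based indistinguishability framework while accounting for composition across multiple training iterations. Concretely, I must argue that after all cryptographic hops the only $b$-dependent quantity in $\adv$'s view is the sequence of decrypted aggregates, and then invoke DP composition so that the noise scale $\sigma \geq cTS_f/\epsilon$ already absorbs the $k$-round exposure. A second subtlety is that, throughout anonymity, I rely implicitly on at least one honest aggregator remaining among the target $\agg$ chosen by the challenge client; this is guaranteed by the $n-t$ corruption bound but must be threaded through the reductions so that the IND-CPA hop is actually available. Once this bookkeeping is in place, the remaining hybrids are standard.
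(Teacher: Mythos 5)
Your proposal follows essentially the same route as the paper: correctness reduces to functional correctness of the primitives plus the DP error bound $\alpha$, anonymity is a hybrid argument from $b=0$ to $b=1$ over the zero-knowledge simulator, commitment hiding, and IND-CPA of the threshold HE, and dataset privacy combines the same cryptographic reductions with the residual $\gamma$ from Equation~\eqref{eq:gamma}. Your explicit hop replacing $\taag = \prf_{\msk}(a)$ with a random string via PRF security is a detail the paper's hybrids leave implicit, but it is a refinement of, not a departure from, the paper's argument.
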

\section{Performance Evaluation}
\label{sec:eval}
In this section, we provide details on the implementation and performance evaluation of \sysname, and benchmarks to measure its overhead compared to prior work.

\subsection{Implementation}
\vspace{-4pt}
For hash functions, we use the Pedersen hash function~\cite{hopwood2020zcash}, 
with an alternative implementation using
Baby-Jubjub elliptic curve~\cite{belles2021twisted} and 4-bit windows~\cite{jubjubImpl}, which requires less constraints per bit than the original implementation.
For threshold additive-homomorphic encryption, we use the threshold version of Paillier encryption scheme~\cite{damgaard2001generalisation} based on~\cite{paillier-impl}. For digital signatures, we use EdDSA~\cite{bernstein2012high} over Baby-Jubjub elliptic curve based on~\cite{ecdsa-impl}.
For zero-knowledge proofs, we use Groth16 zk-SNARKS~\cite{groth2016size} implemented in libsnark~\cite{libsnark}.
We use PyTorch~\cite{paszke2019pytorch} to incorporate differential privacy, and implement the FedSGD~\cite{chen2016revisiting} algorithm for federated learning.

{\textbf{Dataset and Models.}} We evaluate the performance of \sysname using three federated learning tasks. Our first benchmark is LeNet5~\cite{lecun1998gradient}  architecture with 61.7K parameters trained on the MNIST~\cite{lecun1998mnist} dataset.
Our second benchmark is ResNet20~\cite{he2016deep} architecture with 273K parameters trained over the CIFAR10~\cite{krizhevsky2010cifar} dataset. 
Our third benchmark is SqueezeNet~\cite{han2015learning} with 832K parameters trained over TinyImageNet~\cite{yao2015tiny} dataset. This benchmark is the largest studied in private federated learning literature~\cite{chowdhury2021eiffel,truex2019hybrid, burkhalter2021rofl}.
Since batch normalization is not compatible with DP~\cite{yu2021not}, we replace all batch normalization layers with group normalization~\cite{wu2018group} in ResNet20 and SqueezeNet with negligible effect on accuracy.

{\textbf{Configuration.}}
For our runtime experiments, we consider a network of $N=\{16,32,64,128,256,512\}$ clients. We also assume one committee consisting of 3 aggregators. Runtimes are benchmarked on an Intel i9-10900X CPU running at 3.70GHz with 64GB of memory assuming 8 threads, and the mean of 5 runs are reported for each experiment. We present micro-benchmarks of \sysname components as well as end-to-end benchmarks for a training iteration. We also evaluate \sysname accuracy under IID and non-IID dataset settings. The DP parameters are set as $\epsilon=0.9$, $\delta=10^{-5}$, and norm clipping threshold $C=1$ for MNIST and $C=2$ for CIFAR10 and TinyImageNet benchmarks.%

\begin{table}[t!]
\caption{zk-SNARKs runtime for client setup and training.}
\label{tab:zkp-performance}
\vspace{-6pt}
\centering
\resizebox{.4\textwidth}{!}{
\begin{tabular}{lcccc}\toprule
           & \multicolumn{2}{c}{Setup} & \multicolumn{2}{c}{Training} \\
\cmidrule(lr{0.5em}){2-3}
\cmidrule(lr{0.5em}){4-5}
\# Clients & Prove (s)  & Verify  (ms) & Prove (s)    & Verify (ms)   \\
\midrule
16         & 1.97       & 4.65         & 0.79         & 4.67          \\
32         & 2.04       & 4.74         & 0.85         & 4.55          \\
64         & 2.09       & 4.61         & 0.96         & 4.57          \\
128        & 2.16       & 4.62         & 1.04         & 4.53          \\
256        & 2.22       & 4.64         & 1.14         & 4.54          \\
512        & 2.28       & 4.65         & 1.25         & 4.56         \\
\bottomrule
\end{tabular}}
\end{table}

\subsection{Results}
\noindent\textbf{Runtime Overhead.}
Table~\ref{tab:zkp-performance} shows the performance of zero-knowledge proof implementation of clients setup and training circuits. The prove and verify runtimes are reported for different numbers of clients participating in the learning task. The size of proof is a constant 1019 bits. The setup prove overhead is a one-time cost for clients, and the training prove overhead is incurred for each training iteration. As the results suggest, the prover runtime for both setup and training increase sub-linearly with the number of clients, remaining under 2.5 sec for all experiments. The verifier runtime remains constant under 5 msec. Later we will show that ZKP cost is a fraction of the total cost and add little overhead to the overall runtime.

\begin{figure}[t!]
\centerline{
\includegraphics[width = 0.9\columnwidth]{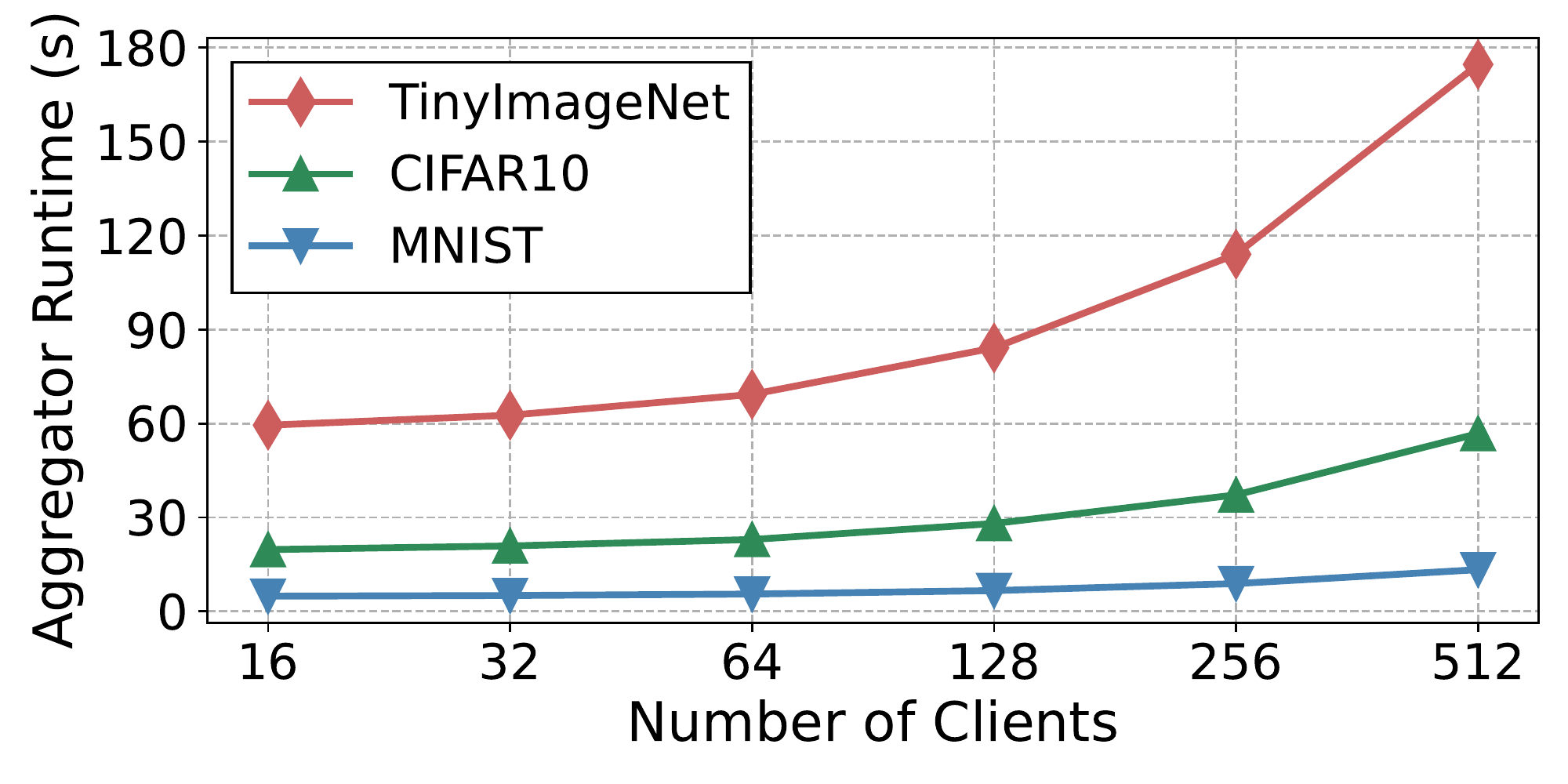}}
\vspace{-8pt}
\caption{Aggregator computation time for a training iteration for MNIST on LeNet5, CIFAR10 on ResNet20, and TinyImageNet on SqueezeNet.} 
\label{fig:agg-train}
\end{figure}

Next, we show results for an end-to-end training iteration. Aggregators' runtime during training is found in Figure~\ref{fig:agg-train} for different numbers of clients. Aggregators receive the ciphertexts from clients, perform aggregation using homomorphic addition over ciphertexts followed by a partial decryption after which the result is sent to the model owner. 
The decryption cost only depends on the size of the model, while the cost of aggregating ciphertexts increases with the number of clients. 
For our largest network of clients with 512 participants, the aggregator runtime is 13.3, 56.8, and 174.6 sec for MNIST, CIFAR10, and TinyImageNet benchmarks respectively.

\begin{figure}[t!]
\centering
\centerline{
\includegraphics[width = 0.9\columnwidth]{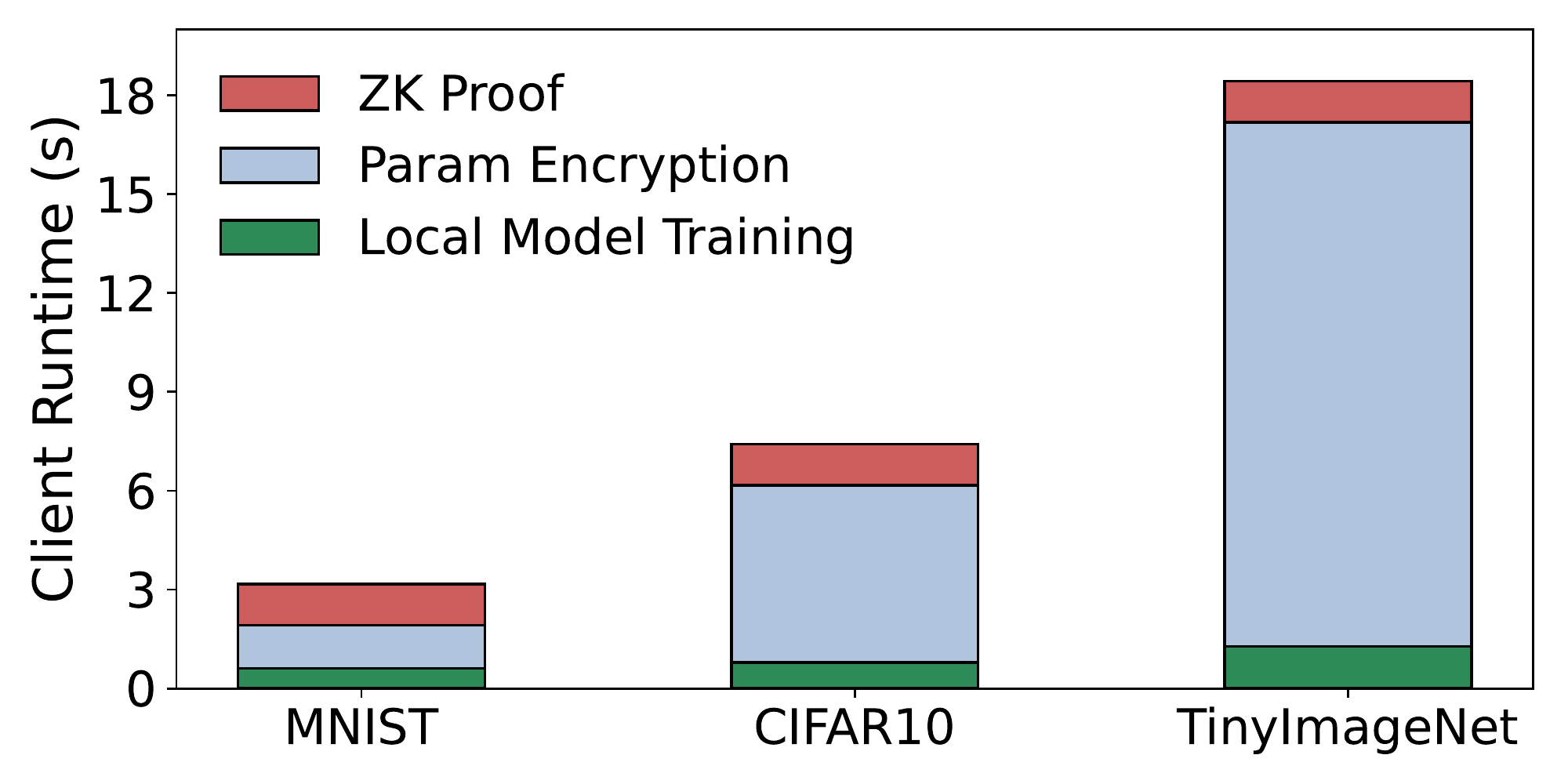}}
\vspace{-6pt}
\caption{Breakdown of client runtime for each training iteration for MNIST on LeNet5, CIFAR10 on ResNet20, and TinyImageNet on SqueezeNet} 
\label{fig:client-breakdown}
\end{figure}

\begin{figure*}[t!]
\centering
    \subfloat[]{{\includegraphics[width=0.32\textwidth]{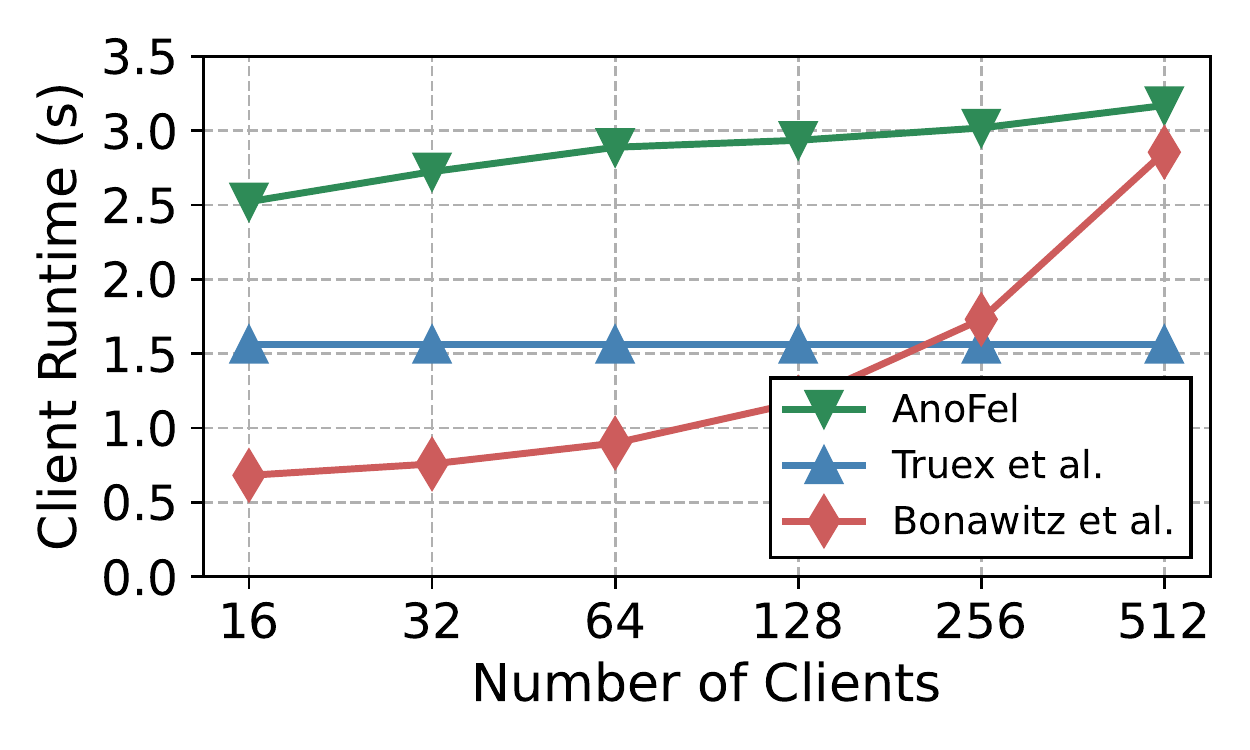} }} 
    \hfill
    \subfloat[]{{\includegraphics[width=0.32\textwidth]{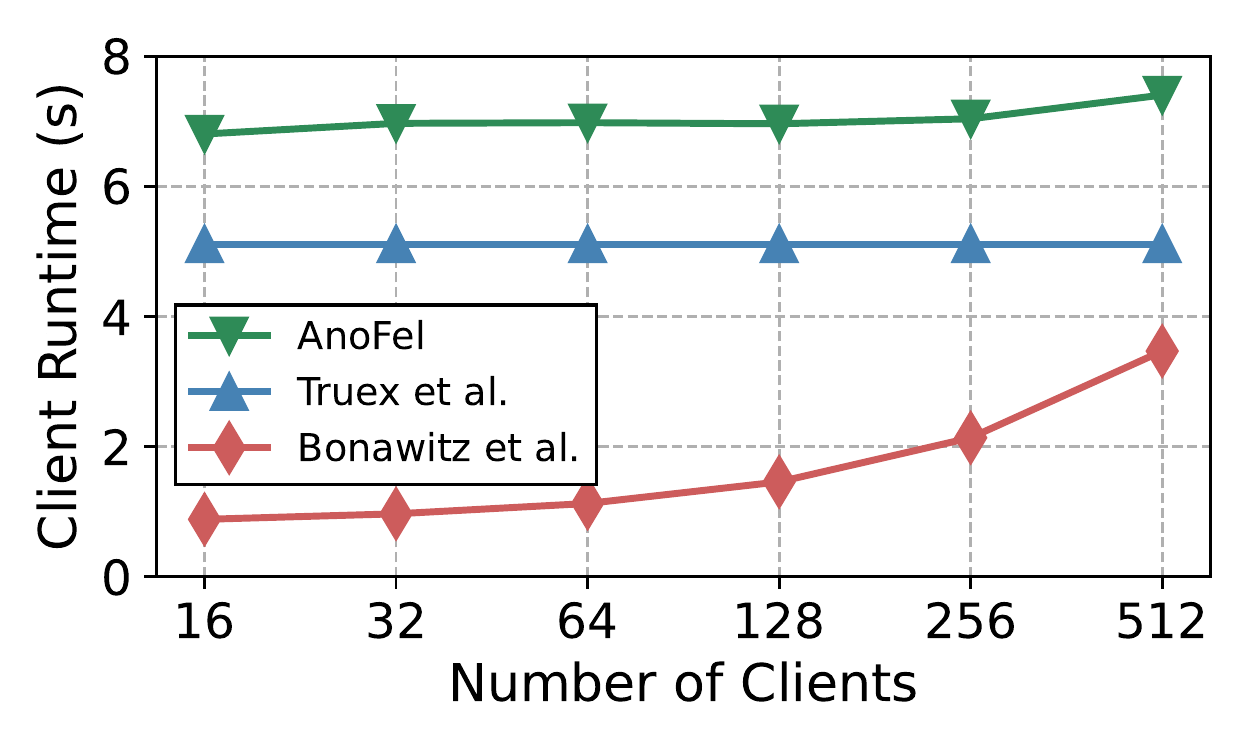} }}
    \hfill
    \subfloat[]{{\includegraphics[width=0.32\textwidth]{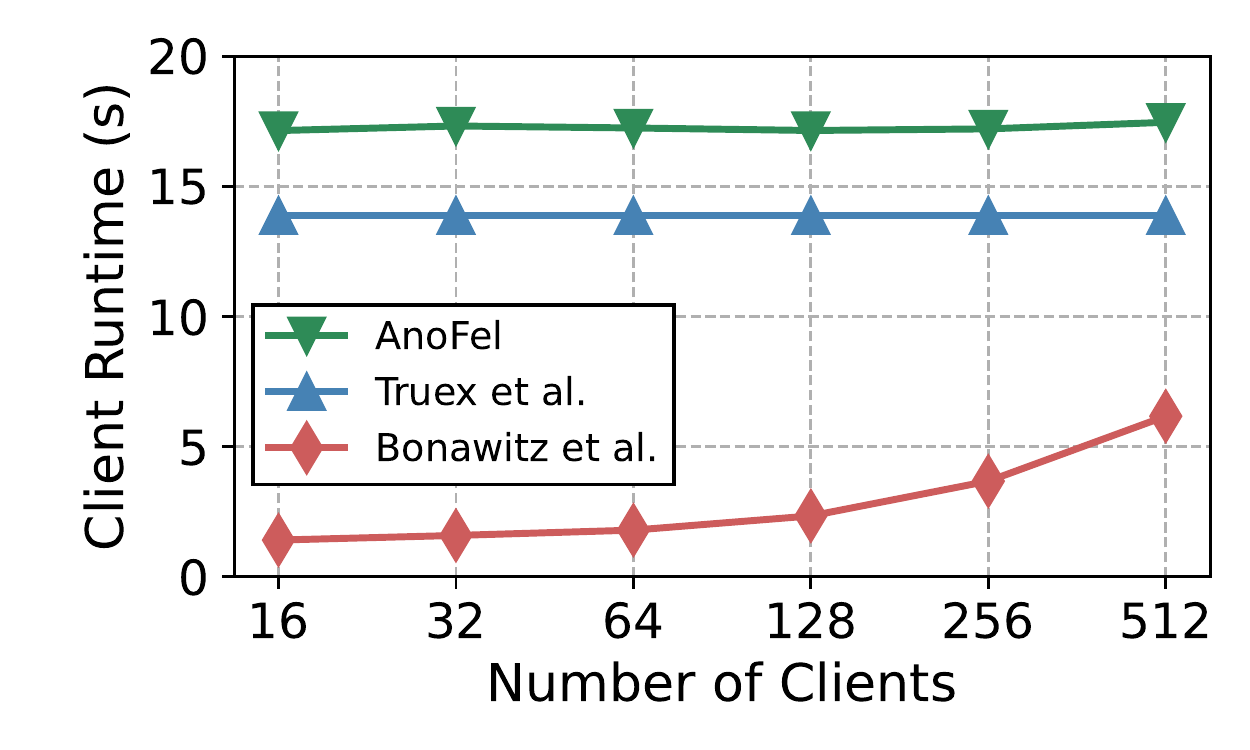} }}
\vspace{-12pt}
\caption{Comparing client computation time for one training round of \sysname, Truex et al.~\cite{truex2019hybrid}, and Bonawitz et al.~\cite{bonawitz2017practical} for (a) MNIST on LeNet5, (b) CIFAR10 on ResNet20, and (c) TinyImageNet on SqueezeNet. } 
\label{fig:client-train}
\end{figure*}

Figure~\ref{fig:client-breakdown} presents the client runtime with detailed breakdown for a training iteration with 512  clients for our benchmarks. The breakdown includes the cost of client local model training, model update encryption for the target $\agg$ aggregator committee, and ZKP generation. The cost of noising updates, generating keypairs, signatures, and hash computations are negligible, thus omitted from the plot.
Local model training cost is measured for training over a batch of 1024 images on a GeForce RTX 2080 Ti GPU, taking 0.616, 0.791, and 1.276 sec for MNIST, CIFAR10, and TinyImageNet benchmarks respectively. 
The remaining protocol costs (encryption and ZKP) are measured using the CPU described in configuration. 
For MNIST, the overhead of generating the ZKP is $~39\%$, and for the larger CIFAR10 and TinyImageNet benchmarks the ZKP overhead constitutes $17\%$ and $7\%$ of the total runtime. For all benchmarks, the runtime cost is dominated by the model update encryption step.

\vspace{4pt}
\noindent\textbf{Communication Overhead.} The communication overhead of the parties during setup and training phases are as follows.

\emph{Clients}: Client $i$'s setup involves the certifier's signature (96 B), and posting $m = (\comm_i, \taag, \pk_{sig}, \pi, \sid)$ and its signature (360 B). Each training iteration involves obtaining model parameters (121 KB, 527 KB, and 1.6 MB of 16-bit updates for LeNet5, ResNet20, and SqueezeNet respectively) and posting $m = (\mathbf{c}, AG_{\pk}, \taag, \pk_{sig}, \sid, \pi)$ and its signature (9.2 MB, 41 MB, and 124 MB for LeNet5, ResNet20, and SqueezeNet respectively).

\emph{Aggregators}: During setup, aggregators post their (signed) public key (160 B). In each training iteration, aggregators receive encrypted updates from clients and send partial decryptions to model owner (9.2 MB, 41 MB, and 124 MB for LeNet5, ResNet20, and SqueezeNet respectively).

\emph{Model Owner}: During each training iteration, model owner posts the model updates (124 KB and 546 KB for Lenet5 and ResNet20, respectively), and obtains partial ciphertexts from aggregators (9.2 MB, 41 MB, and 124 MB for LeNet5, ResNet20, and SqueezeNet respectively).

\vspace{4pt}
\noindent\textbf{Comparison to Baseline.} To better understand the performance of \sysname, we provide comparison to prior work on privacy-preserving federated learning. We don't know of any other framework that provides anonymity guarantees similar to \sysname, and therefore we chose two recent systems for privacy-preserving federated learning, namely, Truex et al.~\cite{truex2019hybrid} and Bonawitz et al.~\cite{bonawitz2017practical}. Truex et al. present a non-interactive protocol that deploys threshold homomorphic encryption for secure aggregation, and Bonawitz et al. develop an interactive protocol based on masking to protect the client updates during aggregation. 

We benchmarked the client runtime for a training iteration in Bonawitz et al. protocol based on implementation found at~\cite{google-impl} with fixes to allow more than 40 clients, and Truex et al. protocol using the implementation found at~\cite{truex-impl} for different number of clients. The results are shown in Figure~\ref{fig:client-train}.
When compared to Truex et al. protocol, \sysname is at most 2$\times$, 1.4$\times$, and 1.3$\times$ slower on MNIST, CIFAR10, and TinyImageNet respectively for different number of participants. 
Compared to Bonawitz et al. protocol, \sysname is at most 3.7$\times$, 7.7$\times$, and 12.1$\times$ slower on MNIST, CIFAR10, and TinyImageNet respectively. With larger number of clients, the runtime gap between Bonawitz et al. and \sysname reduces. For 512 participating clients \sysname is only slower by 1.1$\times$ on MNIST, 2.1$\times$ on CIFAR10, and 2.8$\times$ on TinyImageNet. Our results demonstrate the scalability of our framework; the cost of its additional anonymity guarantees, that are not supported by prior work, is relatively low especially in large scale scenarios.

\begin{figure*}[t!]
\centering
    \subfloat[]{{\includegraphics[width=0.24\textwidth]{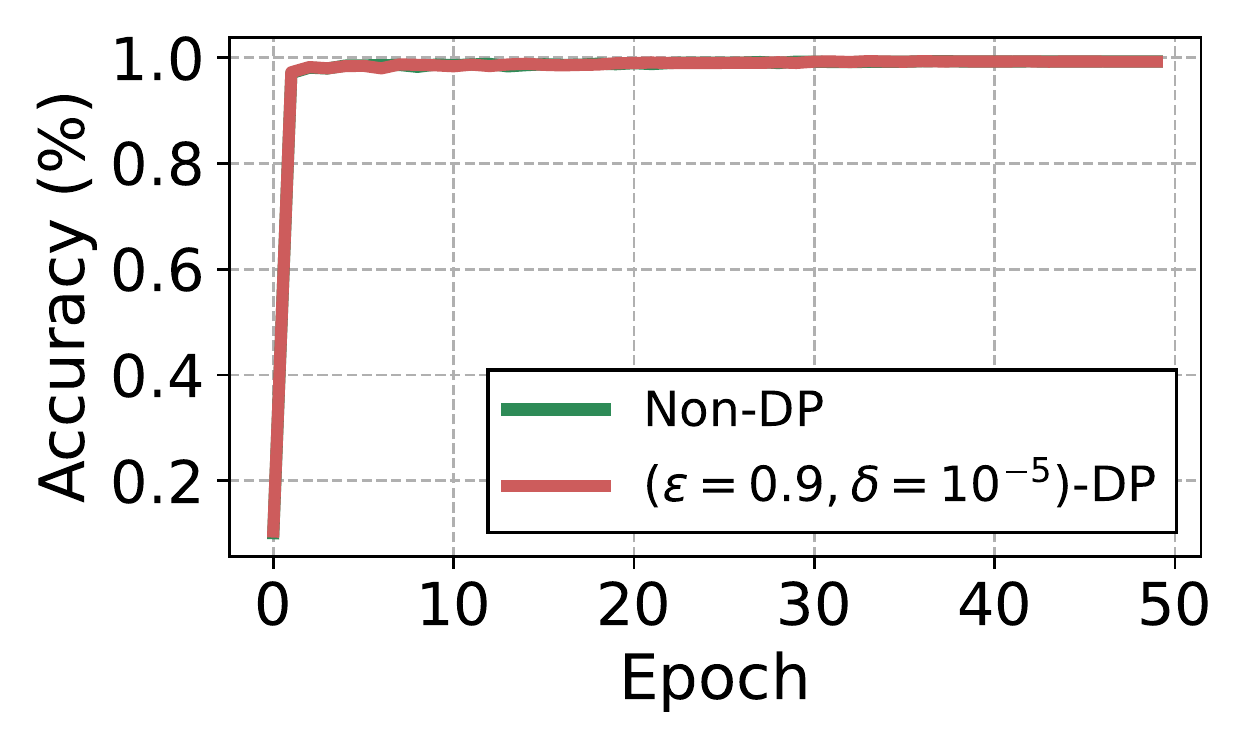} }} 
    \hfill
    \subfloat[]{{\includegraphics[width=0.24\textwidth]{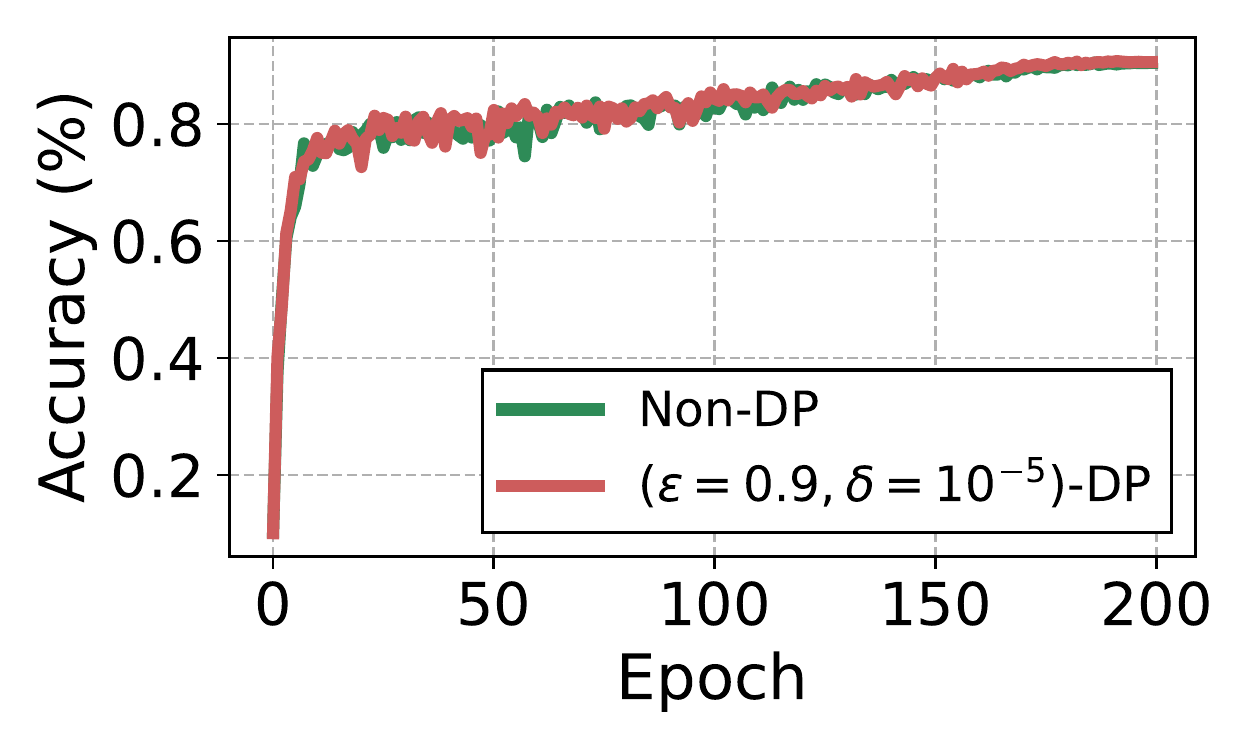} }}
    \hfill
    \subfloat[]{{\includegraphics[width=0.24\textwidth]{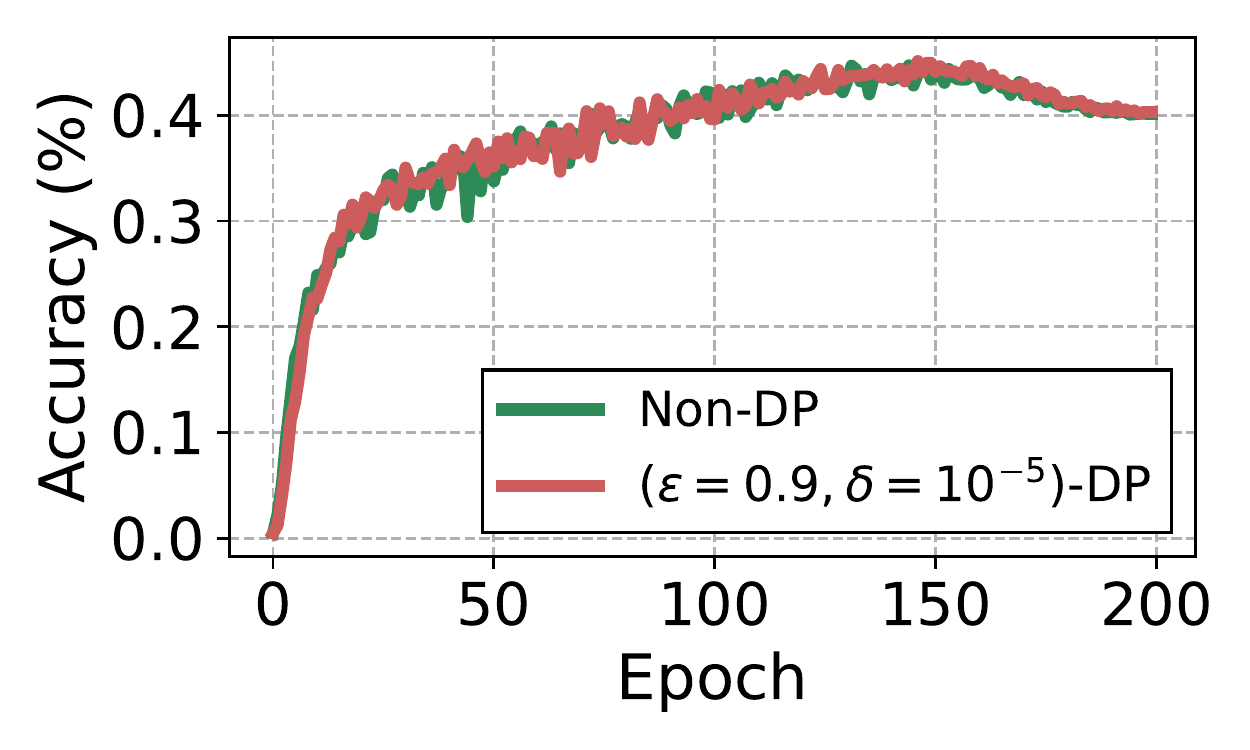} }}
    \hfill
    \subfloat[]{{\includegraphics[width=0.24\textwidth]{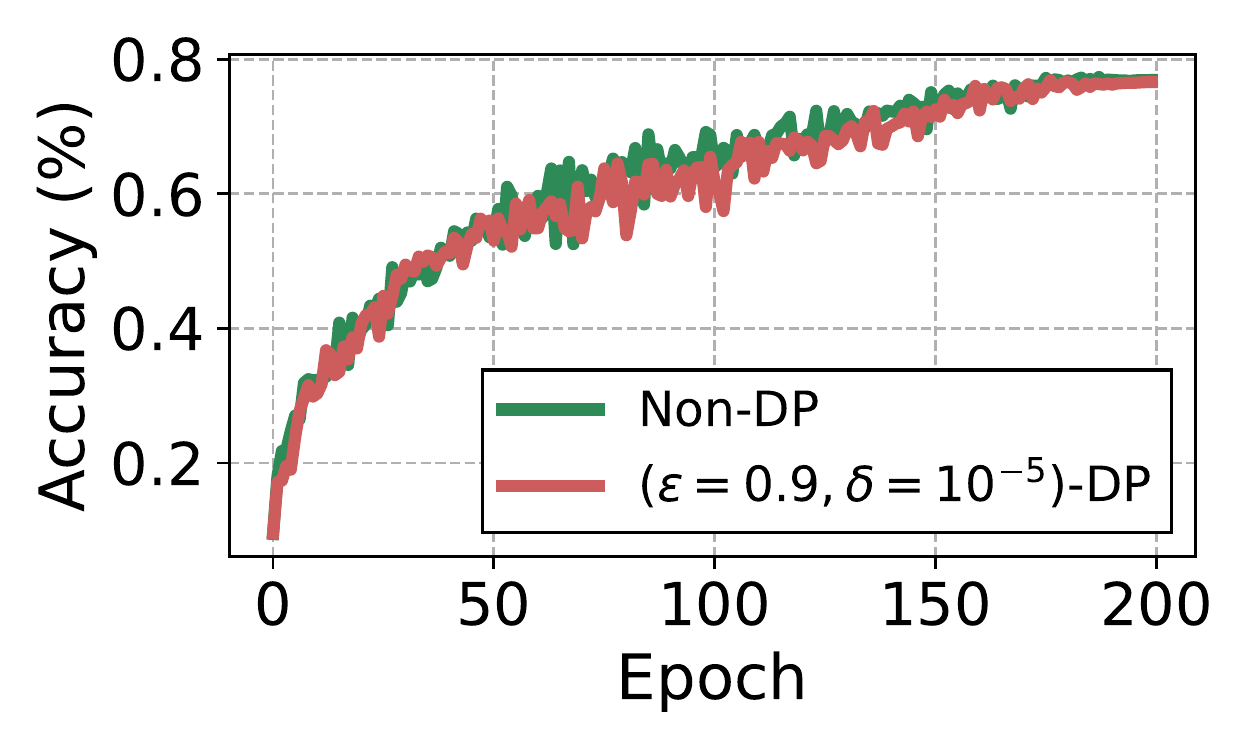} }}
\vspace{-12pt}
\caption{Accuracy of \sysname per epoch for $N=16$ clients compared with a non-DP baseline with IID datasets for (a) MNIST on LeNet5, (b) CIFAR10 on ResNet20, (c) TinyImageNet on SqueezeNet, and (d) non-IID dataset for CIFAR10 on ResNet20.} 
\label{fig:dp-iid}
\end{figure*}

\noindent\textbf{Model Accuracy.} We evaluate the model test accuracy in \sysname incorporating DP and compare to a non-DP baseline in Figure~\ref{fig:dp-iid}. For these experiments we assume number of clients $N=16$. Figure~\ref{fig:dp-iid}(a-c) presents the convergence characteristics of our benchmarks assuming IID datasets among clients. \sysname achieves $99.27\%$, $90.60\%$, and $40.30\%$ test accuracy on MNIST, CIFAR10, and TinyImageNet datasets respectively.
We also present accuracy results for non-IID setting, following the distribution described in~\cite{karimireddy2020byzantine} on CIFAR10. As depicted in Figure~\ref{fig:dp-iid}(d), \sysname achieves an accuracy of $76.61\%$. Across all benchmarks, \sysname obtains models with less than $0.5\%$ accuracy loss compared to non-DP models.

\section{Related Work}
\label{sec:related-work}
\vspace{-4pt}

\noindent{\bf Private federated learning.} Bonawitz et al.~\cite{bonawitz2017practical} is among the earliest works on private federated learning. It handles client dropouts (but not addition) using an interactive protocol. The proposed scheme does not support anonymity; each client is known by a logical identity, which in the malicious setting must be tied to a public key (through a PKI) to prevent impersonation. A followup work~\cite{bell2020secure} optimized the overhead of~\cite{bonawitz2017practical} and supported the semi-malicious model---the server is only trusted to handle client registration. This also violates anonymity since the server has full knowledge on the clients. The works~\cite{so2021turbo,kadhe2020fastsecagg,yang2021lightsecagg} also targeted the efficiency of~\cite{bonawitz2017practical}, and all inherit its lack of anonymity and interactivity issues.

Truex et al.~\cite{truex2019hybrid} use homomorphic encryption and differential privacy to achieve secure aggregation. This makes it easy to handle dropouts---since a user's update is independent of others'---but not addition since all users must be known at the setup phase to get shares of the decryption key. The proposed scheme relies on clients to decrypt the aggregated updates---which introduces excessive delays, and requires the server to know all clients and communicate with them directly---thus violating anonymity.  Xu et al.~\cite{xu2019hybridalpha} avoid the distributed decryption process, and allows for user additions (up to a maximum cap per iteration). However, this comes at the expense of introducing a trusted party to run the system setup and help in decrypting the aggregated model after knowing who participated in each iteration. Ryffel et al.~\cite{ryffel2020ariann} employ function secret sharing and assumes a fixed client participation, with a (semi-honest) server knows all clients and will communicate with them directly. Thus, client anonymity is not supported. Mo et al.~\cite{mo2021ppfl} use a trusted execution environment to achieve privacy. Beside not supporting anonymity, trusting a hardware is problematic due to the possibility of side channel and physical attacks.

\sysname addresses the limitations of prior work: it supports client anonymity, and does not involve them in the aggregation process. %
Accordingly, \sysname not only reduces overhead, but supports
dynamic participation without any additional recovery protocol or any trusted party. This is in addition to addressing recent attacks resulting from disseminating different initial models to clients~\cite{pasquini2021eluding} as explained previously.

\vspace{3pt}
\noindent{\bf Anonymity and Federated Learning.}
Several techniques were proposed to anonymize the dataset itself before using it in training, such as differential privacy~\cite{yeom2018privacy}, $k$-anonymity~\cite{sweeney2002k,choudhury2020anonymizing}, $l$-diversity~\cite{machanavajjhala2007diversity}, and $t$-closeness~\cite{li2007t} (a survey can be found in~\cite{el2022differential}). These techniques are considered complementary to \sysname: they allow anonymizing a dataset, and our system guarantees client's anonymity in the sense that no one will know if this client participated in training or which updates they have submitted.

The works~\cite{domingo2021secure,li2021privacy,hasirciouglu2021private,zhao2021anonymous,chen2022fedtor} target the same anonymity notion as in \sysname. %
Domingo et al. \cite{domingo2021secure} utilize probabilistic multi-hop routes for model update submission, with the clients known by fixed pseudonyms instead of their real identities. However, such pseudonyms provide only pseudoanonymity; several studies showed how network and traffic analysis can link these pseudonyms back to their real identities~\cite{reid13,koshy14,biryukov2019}. Also, their anonymity guarantees is based on assuming that clients do not collude with the model owner, a strong assumption that \sysname avoids. Li et al.~\cite{li2021privacy} uses interactive zero-knowledge proofs to achieve client anonymity when submitting model updates. Their approach, however, suffers from several security and technical issues: First, any party can generate a secret key and pass the proof challenge, not necessarily the intended client. Second, in this protocol, some parameters must be made public, but no details on how to do this in an anonymous way. Third, no discussion on how to preserve message integrity, making the protocol vulnerable to man-in-the-middle attacks.

The scheme proposed in~\cite{hasirciouglu2021private} works at the physical layer; it randomly samples a subset of clients' updates and aggregates their signals before submitting them to the model owner. The proposed protocol assumes clients are trusted, and does not discuss how to preserve integrity of the communicated updates. Zhao et al.~\cite{zhao2021anonymous} introduces a trust assumption to achieve anonymity; a trusted proxy server mediates communication between clients and model owner. Lastly, Chen et al.~\cite{chen2022fedtor} use a modified version of Tor to preserve anonymity; users authenticate each other and then negotiate symmetric keys to use for encryption. However, the negotiation and authentication processes are interactive, and the model owner records all clients' (chosen) identities, putting anonymity at risk due to the use of fixed identities. Thus, none of these systems supports anonymity in a provably secure way as \sysname does.

\section{Conclusion}
\label{sec:conclusions}
\vspace{-4pt}
In this paper, we presented \sysname, the first framework for private and anonymous user participation in federated learning. \sysname utilizes a public bulletin board, various cryptographic building blocks, and differential privacy to support dynamic and anonymous user participation, and secure aggregation. We also introduced the first formal security notion for private federated learning covering client anonymity. We demonstrated the efficiency and viability of \sysname through a concrete implementation and extensive benchmarking covering large scale scenarios and comparisons to prior work.

\section*{Acknowledgment}
The work of G.A. is supported by UConn's OVPR Research Excellence Program Award.

\bibliographystyle{plain}
\bibliography{anofelBib}

\begin{appendices}
\renewcommand{\thesection}{\Alph{section}}

\section{Proof of Theorem 1}
\label{app:proof}
To prove Theorem~\ref{theorem:pafl-sec}, we need to prove that \sysname does not impact training correctness, and that no $\ppt$ adversary can win the security games defined in Section~\ref{sec:security-def} for anonymity and dataset privacy with non-negligible probability.

Intuitively, \sysname satisfies these properties by relying on the correctness and security of the underlying cryptographic primitives, and the bounds on accuracy and privacy loss offered by DP as employed in our system. The use of a secure zero-knowledge proof guarantees: completeness (a valid honest proof generated by a client will be accepted by the bulletin board validators and the aggregators $\agg$), soundness (a client that does not own a certified dataset cannot register, and a client that does not belong to the registered set cannot forge valid proofs during training), and zero-knowledge (so the proof does not reveal anything about the master public key of the client or its dataset). 

Furthermore, the use of a semantically secure threshold homomorphic encryption scheme guarantees that the ciphertexts of the model updates do not reveal anything about the underlying (plaintext) updates, and adding them will produce a valid result of the sum of these updates. The use of a secure commitment scheme guarantees that a commitment posted by a client $\cl_i$ hides the dataset $D_i$ and binds this client to $D_i$. The security of the digital signature scheme and the PKI guarantees that a malicious adversary cannot forge a certificate for a corrupted dataset, and that a man-in-the-middle attacker cannot manipulate any of the messages that a client, aggregator, or a server send. Also, under the assumption that at least $t$ members of $\agg$ are honest, this guarantees that $S$ will not have access to the individual updates submitted by clients.

Moreover, the use of a $(\epsilon,\delta)$-differential privacy technique leads to small advantage for the attacker in membership attacks as given by equation~\ref{eq:gamma}, and a small error (or loss in accuracy) bound as detailed in Section~\ref{sec:prelim}. We use these parameters in our proofs below.

Formally, the proof of Theorem~\ref{theorem:pafl-sec} requires proving three lemmas showing that \sysname is correct, anonymous, and supports dataset privacy. For correctness, we remark that \sysname does not impact training correctness and accuracy. So if a defense mechanism against inference attacks is employed, and this mechanism provides a trade-off with respect to accuracy (as in differential privacy), \sysname will not impact that level. 

\begin{lemma}\label{lemma:correctness}
\sysname satisfies the correctness property as defined in Definition~\ref{def:pafl}.
\end{lemma}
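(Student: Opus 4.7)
The plan is to reduce $\alpha$-correctness to two essentially orthogonal facts: (i) the cryptographic and bulletin-board machinery in \sysname does not perturb the aggregated gradients, so the only semantic difference between $\mathbf{M}_\Pi$ and $\mathbf{M}_{actual}$ at each iteration is the DP noise injected by clients, and (ii) the cumulative impact of this noise after all training iterations is bounded by the standard Gaussian-mechanism error bound quoted in Section~\ref{sec:prelim}. I would therefore organize the proof as an iteration-by-iteration hybrid argument, followed by a single invocation of the DP error bound.

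First I would fix an arbitrary iteration and compare the aggregated update $\widehat{\bm{g}}_\Pi$ recovered in \sysname with the ideal aggregated update $\widehat{\bm{g}}_{actual} = \sum_i \bm{g}_i$. By the correctness of the commitment scheme, EdDSA, the Groth16 proofs, and the board's validation rules, every honestly-generated client submission is accepted and no honest client's update is dropped. By the correctness of the threshold Paillier scheme and the assumption that at least $t$ aggregators in $\agg$ are honest, the homomorphic sum $\sum_i \mathsf{Encrypt}(\bm{g}_i + \bm{\eta}_i)$ decrypts to $\sum_i \bm{g}_i + \sum_i \bm{\eta}_i$ with probability $1 - \negl(\lambda)$, where $\bm{\eta}_i$ is the noise sampled by client $i$ through $\mathsf{DP.noise}$. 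Hence $\widehat{\bm{g}}_\Pi = \widehat{\bm{g}}_{actual} + \sum_i \bm{\eta}_i$, and the only deviation is the aggregated Gaussian noise.

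Next, I would bound the effect of this perturbation on the final model. Because \sysname applies the per-client noise scaling of~\cite{truex2019hybrid}, the aggregate $\sum_i \bm{\eta}_i$ realizes a single-shot Gaussian mechanism with parameters $(\epsilon,\delta)$ and sensitivity $S_f$ as configured in Section~\ref{sec:prelim}; a short argument is needed here to confirm that dividing the per-client noise scale by the (public) number of participants preserves the intended per-iteration sensitivity. Summing over $k$ training iterations, the standard composition bound for the Gaussian mechanism gives $\error(\mathbf{M}_{actual}, \mathbf{M}_\Pi) \leq O(R\sqrt{\log k})$ with $R := S_f \sqrt{k \log 1/\delta}/\epsilon$, with high probability, which is precisely the $\alpha$ exhibited in Section~\ref{sec:prelim}. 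This matches the bound claimed for the $(\alpha,\gamma)$-PAFL scheme.

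The main obstacle is not cryptographic---the correctness of each building block is either proved in its original paper or holds by inspection---but rather the propagation of noise across iterations under the Truex et al. optimization. Specifically, I would need to carefully argue that even when the active client set varies across iterations (due to dynamic participation), the per-iteration DP guarantee, and therefore the per-iteration error distribution, remains within the scale assumed by the cited bound. Once this is established, the lemma follows by direct substitution into Definition~\ref{def:pafl}.
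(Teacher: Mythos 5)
Your proposal is correct and takes essentially the same route as the paper's proof: the unforgeability of the signatures and the correctness of the threshold homomorphic encryption (with at least $t$ honest aggregators) guarantee that the decrypted aggregate equals the true sum of the clients' updates plus the injected DP noise, and the Gaussian-mechanism error bound from Section~\ref{sec:prelim} then supplies the $\alpha$ of Definition~\ref{def:pafl}. You are in fact more careful than the paper on one point---the paper simply asserts that the trained model differs from the actual one by the DP error bound, without addressing how the per-client noise scaling of Truex et al.\ interacts with a dynamically varying active-client set across iterations, which you correctly flag as the step needing a real argument.
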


\begin{proof}
Correctness follows by the correctness of the homomorphic encryption scheme and the security of the digital signature, as well as the accuracy level provided by DP. A semi-honest client in the training phase will perform training as required and encrypt the updates and post them on the board. Since \sysname uses an existential unforgeable digital signature scheme, a malicious attacker $\adv$ cannot modify the ciphertext of the updates without invalidating the signature, and $\adv$ cannot forge a valid signature over a modified ciphertext. Thus, it is guaranteed that all accepted model updates ciphertexts are the ones produced by the client. Also, since \sysname uses a correct (and secure) homomorphic encryption scheme, the homomorphic addition of the ciphertexts will produce a ciphertext of the sum of the actual updates (with their added noise level by DP). By the correctness of the decryption algorithm, after decrypting the sum ciphertext, the server will obtain the correct value of the aggregated updates in each training iteration. This trained model differs from the actual one by the error bounds $\alpha$ obtained from DP, thus satisfying $\alpha$-correctness. This completes the proof.
\end{proof}

For anonymity, as we mentioned previously, inference attacks will have no impact on anonymity unless the leaked datapoint contains sensitive data (like the identity of the client)---so this attack assumes that the adversary got a hold on the dataset or part of it. As an extra step, a technique can be used to pre-process the dataset to remove sensitive attributes from the dataset, satisfies that and renders membership attacks ineffective in compromising anonymity (note if the attacker gets a hold on a client dataset and he knows the client identity, then he already compromised privacy of that client). The definition of our anonymity property does not assume the adversary knows the dataset or identity of the clients involved in the challenge.\footnote{Even if we allow that, we can define $\gamma$-anonymity property where the attacker wins with probability bounded by $\gamma$ inherited from DP.}

\begin{lemma}\label{lemma:anonymity}
\sysname satisfies the anonymity property as defined in Definition~\ref{def:pafl}.
\end{lemma}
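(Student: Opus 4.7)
I would establish anonymity by a standard hybrid argument that transforms the real $\anongame$ into an ideal game in which the view handed to $\adv$ is information-theoretically independent of the challenge bit $b$, bounding the distinguishing advantage at each hop by the security of the underlying primitives. The challenge query $(\train, \cl_b, \aux_b)$ produces a board posting of the form $(m, \sigma_{\cl_b})$ with $m=(\mathbf{c}, AG_{\pk}, \taag, \pk_{sig}, \sid, \pi)$, and my goal is to show that every component of this posting is computationally independent of $b$ under the conditions enforced by the game (both $\cl_0,\cl_1$ remain honest, so $\msk_0,\msk_1$ are never revealed to $\adv$; $\agg$ retains at least $t$ honest members, so $\adv$ holds strictly fewer than $t$ decryption shares).

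\textbf{Sequence of hybrids.} Starting from $\hybrid_0$ (the real $\anongame$), I would proceed as follows. In $\hybrid_1$ I replace the zero-knowledge proof $\pi$ with one produced by the simulator; indistinguishability follows from the zero-knowledge property of the Groth16 SNARK, which is guaranteed because the statement being proved (membership of some commitment in $\CM_{CL}$, plus correctness of the tag on $\pk_{sig}$) is satisfied for both $\cl_0$ and $\cl_1$. In $\hybrid_2$ I replace $\taag=\prf_{\msk_b}(H(\pk_{sig}))$ with a fresh uniform string; since $\msk_b$ is never otherwise used in any public message, a distinguisher here yields a PRF distinguisher in the random-oracle instantiation of $\prf$. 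In $\hybrid_3$ I replace the honest ciphertext vector $\mathbf{c}$ with a ciphertext vector encrypting a fixed public dummy plaintext; this reduces to IND-CPA of the threshold Paillier scheme, exploiting that $\adv$ holds fewer than $t$ key shares and hence cannot decrypt individual submissions. The signature $\sigma_{\cl_b}$ is formed under a freshly sampled $\sk_{sig}$ over the (now $b$-independent) message $m$, and is therefore distributed identically in both branches. The registration records $\comm_0,\comm_1$ that $\adv$ sees from the setup phase are already $b$-independent by the hiding property of the commitment scheme together with the same PRF/ZK/signature arguments applied at registration time.

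\textbf{Handling the aggregate and multi-query setting.} After $\hybrid_3$ the only residual dependence on $b$ lives in the plaintext aggregate that $\adv$ obtains through $(\access)$ at the end of each iteration. I plan to handle this by invoking the game's requirement that at least two honest clients participate in every iteration together with the Gaussian DP noise injected by each honest client: the aggregate can be written as the plaintext sum of the honest contributions plus the adversarial contributions, and since each honest client adds noise scaled to guarantee $(\epsilon,\delta)$-DP of the overall sum, two honest clients are enough to statistically mask which honest identity produced which contribution. The cleanest phrasing is to partition the analysis so that the ``cryptographic loss'' across $\hybrid_0$--$\hybrid_3$ is negligible, while any residual statistical loss on the aggregate is controlled by the DP parameters (and, under the convention of this paper in which the leakage on the revealed aggregate is accounted for by the $\gamma$ bound for dataset privacy, vanishes for the anonymity game itself because the identities of the honest participants are never tied to individual summands). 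I must also repeat the hybrid reasoning uniformly over all $\train$, $\register$, $\access$, and $\corrupt$ queries that $\adv$ can interleave, but this is routine: each honest client's postings are independent samples of the same structure as the challenge posting and are simulated in exactly the same way.

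\textbf{Main obstacle.} The cryptographic hops are standard and mechanical; the delicate step is the aggregate argument. Unlike the per-submission components, the aggregate is an actual plaintext that $\adv$ sees in the clear, and its value genuinely depends on whether $\cl_0$ or $\cl_1$ was trained. The hard part will be phrasing the claim so that the honest-participant and DP-noise conditions imposed by $\anongame$ provably reduce this dependence to negligible, and in doing so to make precise that the identity-hiding guarantee of $\anongame$ is not the same as a dataset-hiding guarantee (which is what $\dindgame$ and the $\gamma$ bound cover). I expect the cleanest way forward is to condition on the set of honest participants and observe that because each of them follows an identical distribution of posting operations, the permutation of plaintext summands that distinguishes the two branches is masked both by encryption (prior to decryption) and by the symmetry of the honest participants' sampled DP noise (after decryption).
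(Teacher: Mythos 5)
Your proposal is correct and follows the same fundamental strategy as the paper's proof: a hybrid argument over the $\anongame$ in which the zero-knowledge property of the proof system, the hiding property of the commitments, and the IND-CPA security of the threshold encryption each absorb one hop. The differences are in the decomposition and in how carefully two subtle points are handled. The paper walks symmetrically from $b=0$ to $b=1$ through eight hybrids ($\hybrid_0$ through $\hybrid_7$), replacing the \emph{entire} registration output and the \emph{entire} training output of the challenge clients with fresh ones, and justifies each wholesale replacement by ZK plus hiding (for registration) and ZK plus semantic security (for training). You instead go one-directionally to a $b$-independent ideal game and replace the posting component by component. Your finer decomposition buys you two things the paper leaves implicit: first, you explicitly reduce the indistinguishability of the tag $\taag = \prf_{\msk_b}(H(\pk_{sig}))$ to PRF security, whereas the paper's hybrids never mention the PRF at all even though $\taag$ is a deterministic function of $\msk_b$ posted in the clear and is covered by neither the ZK nor the hiding nor the IND-CPA reductions; second, you confront the plaintext aggregate revealed by $(\access)$ head-on, whereas the paper disposes of it in a single opening sentence asserting that the two-honest-participant condition suffices. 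Your instinct that the aggregate is the delicate step is right; the paper's implicit resolution is the modeling choice (stated in its game description) that $\aux_0,\aux_1$ do not include the actual datasets, so $\adv$ cannot predict either client's plaintext contribution, and the second honest participant prevents exact extraction. Your proposed DP-symmetry argument is a more robust but also more demanding route, and as you note it risks conflating identity-hiding with the dataset-hiding that $\dindgame$ and the $\gamma$ bound are meant to cover; anchoring the argument on the paper's assumption about $\aux_b$ is the cleaner way to close it.
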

\begin{proof}
Under the assumption that at least one honest client (other than $\cl$) has submitted updates during the challenge training iteration (as described in the game definition earlier), accessing the aggregated model updates at the end of any iteration will not provide $\adv$ with any non-negligible advantage in winning the game. Thus, the proof is reduced to showing that all actions introduced by \sysname preserve anonymity. We prove that using a similar proof technique to the one in~\cite{quisquis}, where we show a series of hybrids starting with an $\anongame$ with $b = 0$ ($\hybrid_0$), and finishing with an $\anongame$ game with $b = 1$ ($\hybrid_7$). By showing that all these hybrids are indistinguishable, this proves that $\adv$ cannot tell which client was chosen for the challenge $\train$ command in $\anongame$. Now, we proceed with a sequence of hybrid games as follows: \medskip

\noindent\underline{$\hybrid_0$}: The game $\anongame$ with $b = 0$. \medskip

\noindent\underline{$\hybrid_1$}: Same as $\hybrid_0$, but we replace the zero-knowledge proofs with simulated ones, i.e., we invoke the zero-knowledge property simulator for each of the $\register$ and $\train$ queries, and we replace the actual proofs in the output of these queries with simulated ones. The hybrids $\hybrid_0$ and $\hybrid_1$ are indistinguishable by the zero-knowledge property of the ZKP system that \sysname uses. That is, if $\adv$ can distinguish them, then we can build another adversary $\adv'$ that can break the zero-knowledge property, which is a contradiction. \medskip

\noindent\underline{$\hybrid_2$}: Same as $\hybrid_1$, but we replace $(\cl_0, \aux_0, \cl_1, \aux_1)$ with fresh output $(\cl_0', \aux_0', \cl_1', \aux_1')$. That is, we choose fresh datasets and register two fresh clients using them. So if $\setup$ created a state with $n$ clients, any $\register$ query for any of the $n$ clients other than $\cl_0$ and $\cl_1$ will proceed as in $\hybrid_1$. However, if it is for $\cl_0$ or $\cl_1$, we replace them with $\cl_0'$ or $\cl_1'$ and proceed.

The hybrids $\hybrid_1$ and $\hybrid_2$ are indistinguishable by the zero-knowledge property of the ZKP system and the hiding property of the commitment scheme that \sysname uses (which implies that client registration is indistinguishable). That is, if $\adv$ can distinguish them, then we can build two adversaries: $\adv'$ that can break the zero-knowledge property of the ZKP system, and $\adv''$ that can break the hiding property of the commitment scheme, which is a contradiction. \medskip

\noindent\underline{$\hybrid_3$}: Same as $\hybrid_2$, but we replace the output of training any of $(\cl_0, \aux_0, \cl_1, \aux_1)$ with fresh output produced by training $(\cl_0', \aux_0', \cl_1', \aux_1')$. As above, if $\setup$ created a state with $n$ client registrations, any $\train$ query for any of the $n$ clients other than $\cl_0$ and $\cl_1$ will proceed as in $\hybrid_2$. However, if the train query is for $\cl_0$ or $\cl_1$, we replace them with training output based on the fresh datasets owned by $\cl_0'$ or $\cl_1'$ and proceed.

$\hybrid_2$ and $\hybrid_3$ are indistinguishable by the zero knowledge property of the ZKP system and the semantic security of the homomorphic encryption scheme used in \sysname (which implies that training is indistinguishable). If $\adv$ can distinguish them, then we can build two adversaries: $\adv'$ that can break the zero-knowledge property of the ZKP system, and $\adv''$ that can break the semantic security of the encryption scheme, which is a contradiction. \medskip

\noindent\underline{$\hybrid_4$}: Same as $\hybrid_3$, but with $b = 1$. The hybrids $\hybrid_3$ and $\hybrid_4$ are indistinguishable by the indistinguishability of model training as described above. \medskip

\noindent\underline{$\hybrid_5$}: Same as $\hybrid_4$, but with $(\cl_0, \aux_0, \cl_1, \aux_1)$ used in training as in the original game. So this is $\hybrid_3$ with $b = 1$. The hybrids $\hybrid_4$ and $\hybrid_5$ are indistinguishable by the indistinguishability argument of $\hybrid_3$ and $\hybrid_2$. \medskip

\noindent\underline{$\hybrid_6$}: Same as $\hybrid_5$, but with $(\cl_0, \aux_0, \cl_1, \aux_1)$ used in registration as in the original game. So this is $\hybrid_2$ with $b = 1$. The hybrids $\hybrid_5$ and $\hybrid_6$ are indistinguishable by the indistinguishability argument of $\hybrid_2$ and $\hybrid_1$. \medskip

\noindent\underline{$\hybrid_7$}: Same as $\hybrid_6$, but with real ZKPs instead of the simulated ones. So this is the original $\anongame$ with $b = 1$. The hybrids $\hybrid_6$ and $\hybrid_7$ are indistinguishable by the indistinguishability argument of $\hybrid_1$ and $\hybrid_0$. \medskip

This shows that $\anongame$ with $b = 0$ is indistinguishable from $\anongame$ with $b = 1$, which completes the proof.
\end{proof}

As for dataset privacy, membership attacks will allow $\adv$ to win the dataset privacy game with advantage $\gamma$ as he selects the datasets involved in the challenge. Thus, our proofs proceeds in two stages: first, we show that the cryptographic pritmitives used in \sysname do not provide $\adv$ with any non-negligible advantage, and second, by the security guarantees of DP, this attacker has an advantage bounded by $\gamma$ due to membership attacks.

\begin{lemma}\label{lemma:privacy}
\sysname satisfies the dataset privacy property as defined in Definition~\ref{def:pafl}.
\end{lemma}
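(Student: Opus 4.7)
The plan is to mirror the hybrid argument used in Lemma~\ref{lemma:anonymity} for anonymity, but with one crucial change: because the aggregated model update is revealed to $\adv$ at the end of each iteration, and this aggregate depends on whether the challenge client trained on $D_0$ or $D_1$, I cannot reduce everything to the cryptographic primitives alone. Instead, the last step of the sequence will invoke the $(\epsilon,\delta)$-DP guarantee of the Gaussian mechanism of Section~\ref{sec:prelim} together with the at-least-two-honest-clients condition to bound the residual advantage by $\gamma$ via Equation~\eqref{eq:gamma}.

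Concretely, I would define hybrids $\hybrid_0,\dots,\hybrid_6$ starting from $\dindgame$ with $b=0$ and ending with $\dindgame$ with $b=1$. In $\hybrid_1$ I replace every ZKP produced during the challenge $\register$ and $\train$ queries with a simulated proof, using the zero-knowledge simulator; indistinguishability from $\hybrid_0$ is immediate from the zero-knowledge property of the proof system. In $\hybrid_2$ I replace the dataset commitment $\comm$ posted during the challenge $\register$ call with a commitment to a fresh independent value; indistinguishability from $\hybrid_1$ follows from the hiding property of the commitment scheme (together with the already-simulated proof, which no longer binds $\comm$ to $D_0$). In $\hybrid_3$ I replace the ciphertexts in $\mathbf{c}$ that are posted during the challenge $\train$ call by encryptions under the aggregator public keys of the \emph{other} $u-1$ aggregators (those not targeted) with encryptions of $0$ coming from an independent source, and for the target aggregator I replace the ciphertext with an encryption of the noisy gradient computed on $D_1$ instead of $D_0$. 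For the $u-1$ non-target aggregators, CPA security of the threshold encryption scheme (valid because $\agg$ contains at least $t$ honest members whose shares are not exposed) gives indistinguishability in the standard way. The delicate case is the target aggregator, since its partial decryption contributes to the revealed aggregate.

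Here is the main obstacle and where the $\gamma$ term enters. The revealed aggregate in the challenge iteration equals the sum of all participating clients' noisy gradients; if the challenge client's contribution changed from $g(D_0) + \mathrm{noise}$ to $g(D_1) + \mathrm{noise}$, the adversary can in principle see a difference. However, by the condition on $\dindgame$ that at least one other honest client participates in that iteration, the challenge client's noisy gradient is \emph{not} directly exposed but only as an additive term inside an aggregate that is itself the output of a $(\epsilon,\delta)$-DP mechanism on the union of local datasets. The distinguishing advantage contributed by this change is therefore bounded by the membership-style bound of Equation~\eqref{eq:gamma}, giving advantage at most $\gamma$ from this step; the CPA-encrypted individual ciphertexts themselves add only a negligible term. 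Thus $\hybrid_2$ and $\hybrid_3$ are $\gamma$-close.

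Finally, I walk back through $\hybrid_4$ (undo the commitment substitution, using hiding again, now with respect to $D_1$), $\hybrid_5$ (undo the simulated ZKPs, using zero-knowledge again), arriving at $\hybrid_6 = \dindgame$ with $b=1$. Summing the bounds along the chain yields a total distinguishing advantage of at most $\gamma + \negl(\lambda)$, so $\adv$ wins with probability at most $\tfrac{1}{2}+\gamma+\negl(\lambda)$ as required by Definition~\ref{def:pafl}. Combined with Lemmas~\ref{lemma:correctness} and \ref{lemma:anonymity}, Theorem~\ref{theorem:pafl-sec} follows.
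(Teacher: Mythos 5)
Your proposal is correct in substance and arrives at the same bound from the same four ingredients (zero-knowledge, commitment hiding, CPA security of the threshold encryption, and the DP bound $\gamma$), but it is organized quite differently from the paper's proof. The paper argues directly by enumerating the adversary's attack surfaces---registration (hiding of $\comm$ plus zero-knowledge of $\pi$), training (semantic security of the ciphertexts), and model access (a membership attack on the revealed aggregate)---and simply sums three negligible terms with $\gamma$. You instead build a full hybrid chain mirroring the paper's anonymity proof (Lemma~\ref{lemma:anonymity}), walking from $\dindgame$ with $b=0$ to $b=1$ and back. What your route buys is precision about composition: the triangle inequality over hybrids makes the ``sum of advantages'' rigorous, and you isolate exactly where the non-negligible $\gamma$ loss enters---namely the single hybrid in which the target aggregator's plaintext changes from the $D_0$-gradient to the $D_1$-gradient, which is the only change visible through the decrypted aggregate; the non-target ciphertexts and the registration artifacts are handled by standard reductions. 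The paper's treatment leaves this localization implicit (it attributes the $\gamma$ term to a separate ``membership attack'' phase rather than to the dependence of the aggregate on $b$). One caveat applies equally to both arguments: the bound in equation~\ref{eq:gamma} is derived for neighboring datasets, whereas $D_0$ and $D_1$ in $\dindgame$ are adversarially chosen and may differ arbitrarily; your appeal to the ``membership-style bound'' for the $\hybrid_2$-to-$\hybrid_3$ step inherits exactly the same leap the paper makes, so it is not a gap relative to the paper, but it is worth flagging if you want the hybrid step to be fully justified.
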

\begin{proof}
As defined in the $\dindgame$, adversary $\adv$ chooses two datasets $D_0$ and $D_1$. Then, the challenger picks one of these datasets at random, registers a client with that dataset, and invokes the $\train$ command for that client. $\adv$ gets to see the output of the registration and training commands, which are the messages and signatures that a client sends in the setup and training phases of \sysname as described before.

In order to win the $\dindgame$, $\adv$ can attack the registration or the training process. That is, for the former $\adv$ tries to reveal which dataset is hidden in the posted commitment or obtain information about the witness underlying the submitted proof, which contains the dataset $D_b$ in this case. While for the latter, $\adv$ may try to infer any information about the plaintext of the model updates ciphertext (recall that the model parameters in a training iteration are public, and thus, $\adv$ can produce the model updates resulted from the use of $D_0$ and $D_1$). Note that attacking the ZKP to reveal any information about the commitment that was used, and then attacking that commitment, reduces to the same case of attacking the registration process.

Attacking registration means attempting to break the hiding property of the commitment scheme and the zero-knowledge property of the ZKP system. Since \sysname uses a secure commitment scheme, the former will succeed with negligible probability $\negl_1(\lambda)$. Also, since \sysname uses a secure ZKP system that satisfies the zero-knowledge property, the latter will succeed with negligible probability $\negl_2(\lambda)$. Attacking the training process means attempting to break the semantic security of the encryption scheme. Since \sysname uses a semantically secure encryption to encrypt the model updates, such an attack will succeed with negligible probability $\negl_3(\lambda)$.

Accordingly, $\adv$'s advantage by the cryptographic primitives that we use is $\negl_1(\lambda) + \negl_3(\lambda) + \negl_3(\lambda) = \negl(\lambda)$.

Now, $\adv$ can query the oracle $\pafl$ to access the updated model and perform a membership attack. That is, $\adv$ knows both datasets and query the model over various datapoints to see which dataset was used in training. The success of this strategy is bounded by the privacy loss $\gamma$ given by equation~\ref{eq:gamma}. 

Thus, the probability that $\adv$ wins in the $\dindgame$ is $\frac{1}{2}+ \negl(\lambda) + \gamma$, which completes the proof.
\end{proof}

\noindent{\bf Proof of Theorem~\ref{theorem:pafl-sec}.} Follows by Lemmas~\ref{lemma:correctness},~\ref{lemma:anonymity}, and~\ref{lemma:privacy}.

\end{appendices}

\end{document}